\theoremstyle{plain}
\newtheorem{theorem}{Theorem}
\newtheorem{lemma}[]{Lemma}
\newtheorem{corollary}{Corollary}
\begin{document}
	\title{Optimizing Age of Information in Random-Access Poisson Networks}
	\author{Xinghua~Sun, \IEEEmembership{Member,~IEEE}, Fangming~Zhao, Howard H.~Yang, \IEEEmembership{Member,~IEEE}, Wen Zhan, \IEEEmembership{Member,~IEEE}, Xijun Wang, \IEEEmembership{Member,~IEEE}, \newline and Tony Q.~S.~Quek, \IEEEmembership{Fellow,~IEEE}
		\thanks{X. Sun, F. Zhao and W. Zhan are with the School of Electronics and Communication Engineering, Sun Yat-sen University, Guangdong, China (email: sunxinghua@mail.sysu.edu.cn, zhaofm@mail.sysu.edu.cn, zhanw6@mail.sysu.edu.cn).}
		\thanks{H. H. Yang is with the Zhejiang University/University of Illinois at Urbana-Champaign Institute, Zhejiang University, Haining 314400, China (e-mail:haoyang@intl.zju.edu.cn).}
		\thanks{X. Wang is with School of Electronics and Information Technology, Sun Yat-sen University, Guangdong, China (email: wangxijun@mail.sysu.edu.cn).}
		\thanks{T.\,Q.\,S.\,Quek is with the Information System Technology and Design Pillar, Singapore University of Technology and Design, Singapore 487372 (email: tonyquek@sutd.edu.sg).}}
	
	\maketitle
	\begin{abstract}
		Timeliness is an emerging requirement for many Internet of Things (IoT) applications. In IoT networks, where a large-number of nodes are distributed, severe interference may incur during the transmission phase which causes age of information (AoI) degradation. It is therefore important to study the performance limit of AoI as well as how to achieve such limit. In this paper, we aim to optimize the AoI in random access Poisson networks. By taking into account the spatio-temporal interactions amongst the transmitters, an expression of the peak AoI is derived, based on explicit expressions of the optimal peak AoI and the corresponding optimal system parameters including the packet arrival rate and the channel access probability are further derived. It is shown that with a given packet arrival rate (resp. a given channel access probability), the optimal channel access probability (resp. the optimal packet arrival rate), is equal to one under a small node deployment density, and decrease monotonically as the spatial deployment density increases due to the severe interference caused by spatio-temproal coupling between transmitters. When joint tuning of the packet arrival rate and channel access probability is performed, the optimal channel access probability is always set to be one. Moreover, with the sole tuning of the channel access probability, it is found that the optimal peak AoI performance can be improved with a smaller packet arrival rate only when the node deployment density is high, which is contrast to the case of the sole tuning of the packet arrival rate, where a higher channel access probability always leads to better optimal peak AoI regardless of the node deployment density. In all the cases of optimal tuning of system parameters, the optimal peak AoI linearly grows with the node deployment density as opposed to an exponential growth with fixed system parameters, which sheds important light on freshness-aware design for large-scale network.
		
	\end{abstract}
	\begin{IEEEkeywords}
		Age of information, random access, queuing theory, stochastic geometry.
	\end{IEEEkeywords}
	
	\section{Introduction}
	The Internet of Things (IoT) is expected to make our physical surroundings accessible by placing sensors on everything in the world and converting the physical information into a digital format. The applications of IoT span numerous verticals, including transportation, environmental detection, and energy scheduling.
	In such applications, timely message delivery is of necessity. For example, for intelligent vehicles, real-time updates of road information are crucial for safe driving, and in environmental detection, updating environmental information on time is beneficial to the prediction, as well as preparation, for occurrence of natural disasters. Since an outdated message would become useless, timeliness is one of the critical objectives in the IoT network.
	
	To assess the timeliness of delivered messages, a new metric called the \emph{Age of Information} (AoI) was proposed in \cite{6195689}\cite{6284003}, which is defined as the time elapsed since the most recently received update was generated at its source. Aiming to design systems that can provide fresh information, extensive studies have been conducted to characterize the AoI on the basis of queuing theory. For instance, in \cite{6195689}, the AoI was minimized for first-come-first-served (FCFS) M/M/1, M/D/1, and D/M/1 queues. Multiple sources was considered in \cite{6284003} for the FCFS M/M/1 queue. In \cite{7415972}, the AoI in M/M/1/1 queue and M/M/1/2 queue with both FCFS and LCFS was characterized by considering a finite buffer capacity. The effects of the buffer size, packet age deadline, and replacement on the average AoI were further studied in \cite{7795343}.
	
	However, these studies only focused on a point-to-point communication scenario. In practice, IoT networks generally consist of a large number of nodes that intend to communicate with their destinations via spectrum, which usually constitutes a multiple access network. Due to the broadcast nature of wireless medium, transmissions of nodes will affect each other via the interference they generated. And the characterization of AoI under such a setting has attracted a variety of studies recently [5]-[14]. Specifically, the AoI was minimized in \cite{8943134}\cite{7492912} by scheduling a group of links that are active at the same time and limiting the interference to an acceptable level. Various scheduling policies including Greedy policy, stationary randomized policy, Max-Weight policy, and Whittle's Index policy were proposed in \cite{8514816} to minimize the AoI for periodic packet arrivals. In addition, the AoI performance under stationary randomized policy and Max-Weight policy was analyzed for Bernoulli packet arrivals \cite{8933047}. A joint design of status sampling and updating process to minimize the average AoI was further proposed in \cite{8778671}. Despite the promise of improving the AoI performance, the overhead of centralized scheduling may be too hefty to be affordable for IoT networks with massive connectivity. In that respect, decentralized schemes were also studied from the perspective of AoI optimization. In particular, the effectiveness of slotted ALOHA on minimizing AoI was studied in \cite{8006544}, where each node initializes a channel access attempt at each time slot with a certain probability. A threshold-based age-dependent random access protocol was proposed in \cite{9162973} \cite{yavascan2020analysis}, where each node accesses the channel only when its instantaneous AoI exceeds a predetermined threshold. A distributed transmission policy was proposed in \cite{9174254} based on the age gain which is the reduction of instantaneous AoI when packets are successfully delivered. An Index-Prioritized Random Access scheme was proposed in \cite{8935400}, where nodes access the radio channel according to their indices that reflect the urgency of update. The classic collision model was adopted in these studies, where one node can successfully access the channel if and only if there are no other concurrent transmissions. Albeit significant advances have been achieved, these works did not take into account the key effects of nodes physical attributes in wireless systems such as fading, path loss, and interference.
	
	Stochastic geometry on the other hand provides an elegant way of capturing macroscopic properties of such networks by averaging over all potential geographical patterns of the nodes, which can help to account for sources of uncertainties such as co-channel interference and channel fading. Therefore, this tool has been widely adopted to evaluate the performance of various types of wireless networks \cite{1} \cite{20160000}. Recently, there have been studies of the AoI performance in large-scale networks by combining queuing theory and stochastic geometry [17]-[23]. In particular,
	the lower and upper bounds of the average AoI for Poission bipolar network were characterized in \cite{001} via the introduction of two auxiliary systems. Based on a dominant system where every transmitter sends out packets in every time slot, \cite{002} devised a locally adaptive channel access scheme for reducing the peak AoI. In these studies, the interference was decoupled from the queue status, i.e., whether the queue is empty or not. To characterize the spatio-temporal interactions of queues, a framework was provided in \cite{003} that captures the peak AoI for large-scale IoT networks with time-triggered (TT) and event-triggered (ET) traffic. The effects of network parameters were further studied in \cite{004} \cite{9316915} on the AoI performance in the context of random access networks.
	The spatial moments of the mean AoI of the status update links were characterized in \cite{mankar2020throughput} \cite{mankar2020spatial} based on the moments of the conditional successful probability. These studies focused on a static network topology, i.e., the point process pattern is realized at the beginning of time and keeps unchanged after that, leaving the network scenario with mobility largely unexplored. In this paper, we study the optimization of AoI over a large-scale random access network with mobility. Interestingly, the expression for AoI has concise form in this case and it allows us to obtain optimal system design parameters in a close-form.
	
	In particular, we consider a Poisson bipolar network where each transmitter updates information packets according to an independent Bernoulli process. Similar to that in \cite{7415972}\cite{004}, we adopt a unit-size buffer at the transmitter side which avoids the long waiting time caused by the accumulation of data packets in the buffer. To reduce the overhead of centralized scheduling, each transmitter employs an ALOHA random access protocol, i.e., each transmitter accesses the channel with a certain probability at each time slot. The successful transmission of packets depends on the Signal to Interference plus Noise Ratio (SINR) value at the receiver side. Because of the interference, the buffer states of the transmitters are coupled with each other. By leveraging tools from stochastic geometry and queuing theory, we derive a fixed-point equation of the probability of successful transmission of each transmitter by taking into account the coupling effect. Based on the probability of successful transmission, an analytical expression for the peak AoI is obtained, which is a function of the packet arrival rate and the channel access probability. Using this expression, we find that when the node deployment density is small, the AoI performance can be always improved by choosing a large packet arrival rate or channel access probability. When the node deployment density becomes large, a very high packet arrival rate or channel access probability can in turn deteriorate the AoI performance owing to the severe interference caused by the simultaneous node transmissions. The peak AoI is then optimized by tuning the channel access probability for a given packet arrival rate and by tuning the packet arrival rate for a given channel access probability, respectively. It is found that when the packet arrival rate is optimally tuned, a higher channel access probability always leads to better peak AoI performance, but when the channel access probability is optimally tuned, the peak AoI can be benefited with a smaller packet arrival rate only when the node deployment density is high. We then study how to minimize the peak AoI by jointly tuning the packet arrival rate and the channel access probability, and find that the optimal channel access probability is always set to be one. This indicates that to reduce the waiting time in each transmitters' buffer, each packet should be transmitted as soon as possible. Yet, the packet arrival rate, i.e., the information update frequency, should be lower so as to alleviate the channel contention. For all the three cases, i.e., tuning the channel access probability, tuning the packet arrival rate and joint tuning, the optimal peak AoI grows linearly as the node deployment density increases, which is in sharp contrast to an exponential growth when the system parameters are not properly tuned. This sheds important light on freshness-aware design for large-scale networks.
	
	The remainder of this paper is organized as follows. Section \ref{system model} presents the system model and preliminary analysis. Section \ref{section:p} shows the derivation and anlysis of the probability of successful transmission. In Section \ref{PAoI}, the peak AoI is derived and optimized by tuning system parameters including the channel access probability and the packet arrival rate. Section \ref{Simulation Results} presents the simulation results of above analysis. Finally, Section \ref{conclusion} summarizes the work and draws final conclusion.
	
	\section{System Model and Preliminary analysis}\label{system model}
	Let us consider a Poisson bipolar network where transmitters are scattered according to a homogeneous Poisson point process (PPP) of density $\lambda$. As Fig. \ref{system_model} illustrates, each transmitter is paired with a receiver that is situated in distance $R$ and oriented at a random direction. In this network, the time is slotted into equal-length intervals and the transmission of each packet lasts for one slot.
	The packets arrive at each transmitter following independent Bernoulli processes of rate $\xi$. We assume every transmitter is equipped with a unit-size buffer and hence a newly incoming packet will be dropped if an elder packet is in its service. At the beginning of each time slot, transmitters with non-empty buffers will access the channel with a fixed probability $q$. To better illustrate the channel access process of each transmitter, let us define two parameters $\epsilon^{'}$ and $\epsilon^{''}$, where $\epsilon^{''}\ll\epsilon^{'}\ll1$:
	1) at $t+\epsilon^{''}$, a new packet arrives with probability $\xi$;
	2) at $t+\epsilon^{'}$, each transmitter that has one packet in its buffer accesses the radio channel with probability $q$;
	3) If the transmission is successful, then the packet departs at $t+1-\epsilon^{'}$; otherwise, the packet remains in the queue and will be sent out in the next time slot until success.%
	
	\subsection{Signal-to-Interference-plus-Noise Ratio }
	In this paper, we consider the radio frequency is globally reused, i.e., all the nodes utilize the same spectrum for packet delivery. Moreover, each transmitter employs a universally unified transmit power and thus has an equal mean received SNR $\gamma$ at the receiver. As such, for a generic transmitter $i$, its received SINR at time slot $t$ is given by
	\begin{equation}
	\text{SINR}_i(t)=\frac{h_{ii}(t)R^{-\alpha}}{\sum_{j\neq i}h_{ij}(t)e_j(t)\mathbf{1}(Q_j(t)>0)d_{ij}(t)^{-\alpha}+\gamma^{-1}} ~,
	\label{eq:defineSINR}
	\end{equation}
	where $h_{ij}(t)$ represents the small-scale fading coefficient between transmitter $j$ and receiver $i$, which is assumed to be exponentially distributed with unit mean and varies i.i.d. across space and time, $d_{ij}$ is the distance between transmitter $j$ and destination $i$, and $e_j(t)$ is a binary function, where $e_j(t)=1$ denotes that transmitter $j$ initiates the packet transmission at slot $t$, and $e_j(t)=0$ otherwise.
	$Q_j(t)$ is the queue length of transmitter $j$ at slot $t$. Thus, $\mathbf{1}(Q_j(t)>0)=1$ indicates that transmitter j has a non-empty buffer at time slot $t$, and $\mathbf{1}(Q_j(t)>0)=0 $ otherwise. The parameter $\alpha$ is the path-loss exponent. In this work, we consider a packet is successfully delivered if the received SINR exceeds a decoding threshold $\theta$. Therefore, the corresponding probability of successful transmission for node $i$ can be written as:
	\begin{equation}
	p_{i}(t)=P(\text{SINR}_i (t)>\theta).
	\label{define_p}
	\end{equation}
	\begin{figure}[t]
		\centering
		\includegraphics[width=14cm,height=6cm]{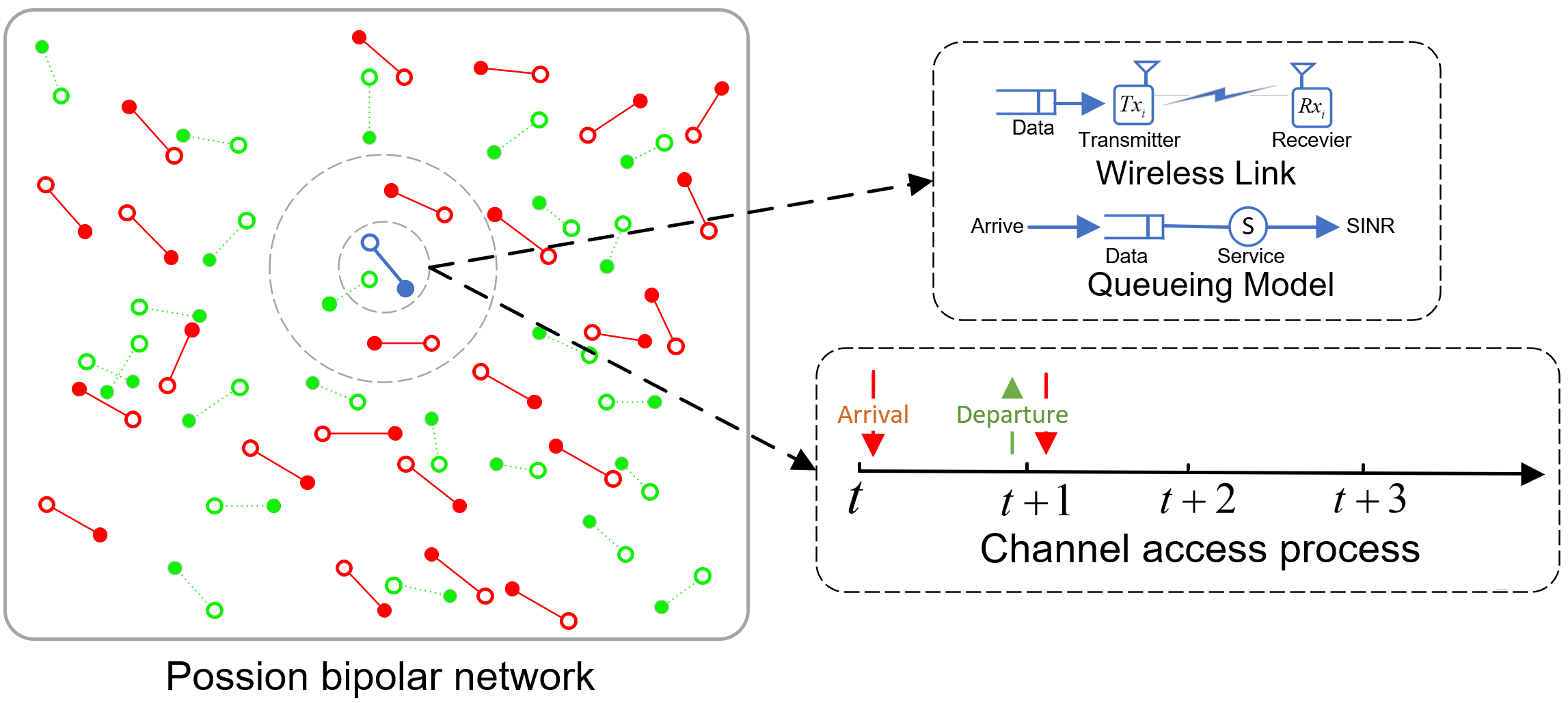}
		\caption{Snapshot of Poisson bipolar network in consideration. The up-right figure illustrates the queueing model of a generic transmiter. The down-right figure illustrates the channel access process.}
		\label{system_model}
	\end{figure}
	Similar to \cite{5601963}, we assume a high mobility random walk model for the positions of transmitters. As such, the received $\text{SINR}_i(t)$ of each transmitter $i$, $i \in \mathbb{N}$, can be considered as i.i.d across time $t$. By symmetry, the probability of successful transmission is also identical across all the transmitters. To that end, we drop the indices $i$ and $t$ in \eqref{define_p} and denote $p$ as the probability of successful transmission. Then, the dynamics of packet transmissions over each wireless link can be regarded as a Geo/Geo/1/1 queue with the service rate $qp$.
	
	\subsection{Performance Metric}
	In this paper, we focus on the performance metric of AoI, which captures the timeliness of information delivered at the receiver side. In Fig. \ref{AoIcurve}, we depict the evolution of AoI $A(t)$ over time for a Geo/Geo/1/1 queue, where $t_k$ denotes the time slot in which the $k^{th}$ packet arrived, $t^{'}_k$ denotes the time slot in which the $k^{th}$ packet is successfully transmitted, and $t^{*}_k$ denotes the time slot in which the $k^{th}$ packet is dropped. From this figure, we can see that the AoI $A(t)$ increases linearly over time and plummets at time slots $t^{'}_1, t^{'}_2, t^{'}_3,\ldots ,t^{'}_n$ where packets are successfully transmitted. Notably, during the period between $t_2$ and $t^{'}_2$, there is a packet arrivals at slot $t^\ast$ but is immediately discarded because the buffer can accommodate only one packet. Formally, the progress of such a process can be written as:
	\begin{equation}\label{define_aoi}
	A(t+1)=\left\{
	\begin{array}{lr}
	A(t)+1~~~~\text{transmission failure} \\
	t-t_k+1~~~\text{transmission successful}.
	\end{array}
	\right.\end{equation}	
	\begin{figure}[t]
		\centering
		\includegraphics[width=12cm,height=6cm]{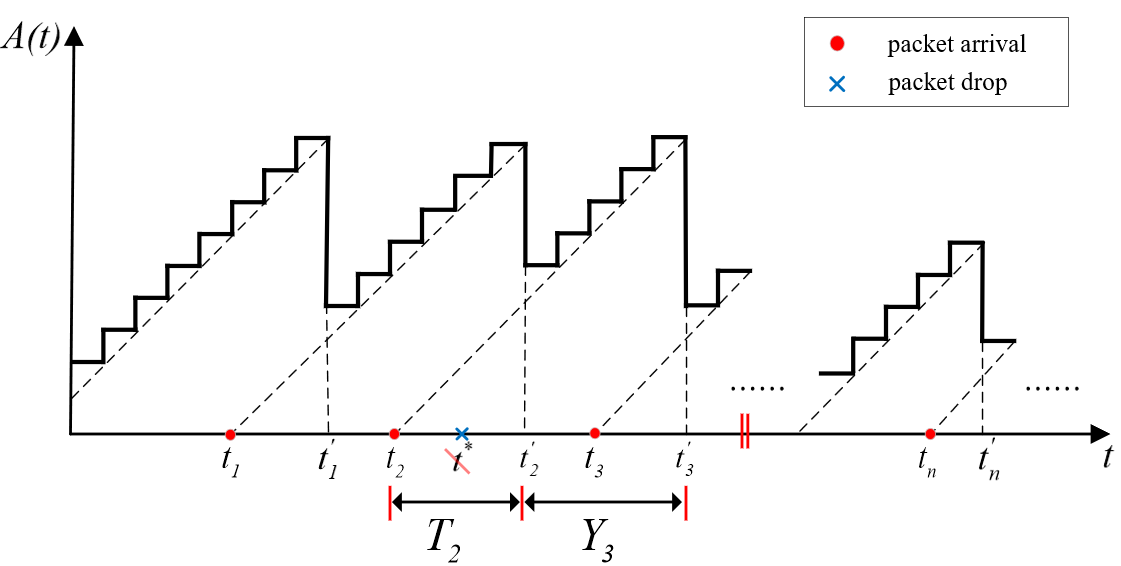}
		\caption{An example of the AoI evolution over time.}
		\label{AoIcurve}
	\end{figure}
	In this paper, we focus on the peak AoI, denoted as $A_p$, which is defined as the time average of age values at time instants when there is packet transmitted successfully as our performance metric. Such a metric is given by \cite{apdefine}
	\begin{equation}
	A_p=\lim_{T\to\infty}\sup\frac{\sum_{t=1}^{t=T}A(t)\mathbf{1}\{A(t+1)\leq A(t)\}}{\sum_{t=1}^{t=T}\mathbf{1}\{A(t+1)\leq A(t)\}}.
	\end{equation}

	\section{probability of successful transmission}\label{section:p}
	The AoI performance is dependent on the probability of successful transmission of each update packet. This section is then devoted to the characterization of the probability of successful transmission. First of all, the following lemma shows that the probability of successful transmission can be written in the form of a fixed-point equation.
	\begin{lemma}\label{lemma:p}
		The probability of successful transmission of a generic transmitter can be obtained as
		\begin{equation}
		p=\exp{\left\{-\lambda cR^2\frac{q \xi}{\xi+pq(1-\xi)}-\theta R^{\alpha}\gamma^{-1}\right\}} \label{eq:p}
		\end{equation}
		where $c=\pi\theta^{\frac{2}{\alpha}}\operatorname{sinc}(\frac{2}{\alpha})$.
		\begin{proof}
			See Appendix \ref{prooflemma1}
		\end{proof}
	\end{lemma}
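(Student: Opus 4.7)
My plan is to decompose the statement into two independent subproblems: (i) determine the stationary probability that a generic transmitter is actively transmitting in a given slot---namely, that it has a non-empty buffer and elects to access the channel---and (ii) use stochastic geometry to compute the SINR success probability, given the density of active interferers. Substituting the first result into the interferer density that appears in the Laplace transform of the second step, and identifying the outcome with $p$, will directly yield the fixed-point equation \eqref{eq:p}.

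For step (i), I would model the unit-size buffer as a two-state discrete-time Markov chain on states \emph{empty} and \emph{non-empty}, sampled at the channel-access instant $t+\epsilon'$. Since the timing convention places the Bernoulli arrival (rate $\xi$) before the transmission attempt within each slot, the transition from \emph{non-empty} to \emph{empty} occurs precisely when the current packet departs and no fresh packet arrives in the next slot, which has probability $qp(1-\xi)$; the transition from \emph{empty} to \emph{non-empty} is simply a fresh arrival, occurring with probability $\xi$. Solving the balance equation $\rho\cdot qp(1-\xi)=(1-\rho)\xi$ then gives the non-empty probability $\rho=\xi/[\xi+qp(1-\xi)]$, so the per-slot activity probability of a generic transmitter is $q\rho=q\xi/[\xi+qp(1-\xi)]$.

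For step (ii), I would appeal to the high-mobility assumption already invoked in the paper: because transmitter locations are re-drawn in every slot, the active transmitters in any given slot form an independent thinning of the underlying PPP; by Slivnyak's theorem and spatial symmetry, each interferer is independently active with probability $q\rho$, producing a PPP of density $\lambda q\rho$. Conditioning on the typical link at distance $R$ with Rayleigh fading, I would then write $P(\text{SINR}>\theta)$ as the product of the noise factor $\exp(-\theta R^{\alpha}/\gamma)$ and the Laplace transform of the Poisson shot-noise interference evaluated at $s=\theta R^{\alpha}$. The latter is the classical calculation: using the substitution $u=r^{\alpha}/(\theta R^{\alpha})$ and the reflection identity $\int_{0}^{\infty}u^{2/\alpha-1}/(1+u)\,du=\pi/\sin(2\pi/\alpha)$, the Laplace transform collapses to $\exp(-\lambda q\rho\cdot cR^{2})$ with $c=\pi\theta^{2/\alpha}\operatorname{sinc}(2/\alpha)$. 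Setting the product equal to $p$ and substituting $\rho$ produces \eqref{eq:p}.

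The main obstacle I anticipate is the independent-thinning step, because the buffer indicators $\mathbf{1}(Q_{j}(t)>0)$ are not genuinely independent across transmitters---all queues are coupled through the common, fluctuating interference field. I would rely on the high-mobility regime to decorrelate each link's SINR in time, and then invoke the standard mean-field decoupling used throughout the stochastic-geometry plus queueing literature, under which each interferer is treated as independently active with the common marginal probability $q\rho$. Once this approximation is granted, the remainder of the argument is routine, and the only bookkeeping left is to align the embedded-chain stationary distribution with the active-interferer density appearing in the Laplace transform.
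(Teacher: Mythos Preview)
Your proposal is correct and follows essentially the same route as the paper: a two-state Markov chain for the buffer yields the non-empty probability $\rho=\xi/[\xi+qp(1-\xi)]$, which is then fed into the standard Poisson shot-noise Laplace-transform calculation (which the paper simply cites rather than rederives) to obtain the fixed-point equation. The only cosmetic difference is that you embed the chain at the post-arrival instant $t+\epsilon'$ and read off $\rho$ directly as the stationary non-empty probability, whereas the paper embeds at the slot beginning and recovers the same $\rho$ indirectly via the effective arrival rate $r=\xi\pi_0$ and the identity $\rho=r/(qp)$.
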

	
	Lemma \ref{lemma:p} indicates that the probability of successful transmission $p$ is determined by the channel access probability $q$, the packet arrival rate $\xi$, the node deployment density $\lambda$, and the TX-RX distance $R$. The following result further characterizes the distribution of roots of \eqref{eq:p}.
	\begin{theorem}\label{Theorem_p_root}
		The fixed-point equation \eqref{eq:p} has three non-zero roots $0<p_A\leq p_S \leq p_L<1$ if $\frac{4}{q}<\lambda cR^2<\frac{((1-\xi)q+\xi)^2}{q^2\xi(1-\xi)}$ and $\xi_l<\xi<\xi_h$, where $\xi_l$ and $\xi_h$ are respectively given as follows:
		\begin{equation}
		\xi_l=\frac{q}{q+\frac{\frac{q\lambda cR^2}{2}-1-q\lambda cR^2\sqrt{\frac{1}{4}-\frac{1}{q\lambda cR^2}}}{\exp{\left\{-\theta R^\alpha\gamma^{-1}-\frac{1}{\frac{1}{2}-\sqrt{\frac{1}{4}-\frac{1}{q\lambda cR^2}}}\right\}}}}
		\label{arrival_low}
		\end{equation}
		and
		\begin{equation}
		\xi_h=\frac{q}{q+\frac{\frac{q\lambda cR^2}{2}-1+q\lambda cR^2\sqrt{\frac{1}{4}-\frac{1}{q\lambda cR^2}}}{\exp{\left\{-\theta R^\alpha\gamma^{-1}-\frac{1}{\frac{1}{2}+\sqrt{\frac{1}{4}-\frac{1}{q\lambda cR^2}}}\right\}}}};
		\label{arrival_high}
		\end{equation}
		otherwise, \eqref{eq:p} has only one non-zero root $0<p_L\leq 1$.
	\end{theorem}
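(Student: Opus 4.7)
The plan is to take logarithms and convert the fixed-point equation \eqref{eq:p} into a zero-counting problem for
\[
h(p) \;=\; \ln p \;+\; \frac{\lambda c R^{2} q \xi}{\xi + p q (1-\xi)} \;+\; \theta R^{\alpha}\gamma^{-1},
\]
so that roots of \eqref{eq:p} in $(0,1]$ coincide with zeros of $h$. Because $h(0^{+}) = -\infty$ and $h(1) > 0$, the intermediate value theorem already gives at least one root in $(0,1)$; the remaining question is how many sign changes $h$ can exhibit, which is dictated entirely by its monotonicity.

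Differentiating $h$ and substituting $x = p q (1-\xi)/\xi$, the critical equation $h'(p)=0$ reduces to the clean quadratic
\[
x^{2} + (2 - \lambda c R^{2} q)\,x + 1 \;=\; 0,
\]
whose two positive real roots $x_{-}\le x_{+}$ (with $x_{-}x_{+}=1$ by Vieta) exist precisely when $\lambda c R^{2} > 4/q$. Writing $p_{\pm} = \xi x_{\pm}/(q(1-\xi))$, $h$ is then increasing on $(0,p_{-})$, decreasing on $(p_{-},p_{+})$, and increasing on $(p_{+},\infty)$. The intermediate value theorem delivers three zeros of $h$ inside $(0,1)$ exactly when (i) $p_{+}<1$, (ii) $h(p_{-})>0$, and (iii) $h(p_{+})<0$; in every other configuration a sign-chart argument leaves only a single crossing. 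Condition (i) is equivalent to $h'(1)>0$, which rearranges directly to the upper bound $\lambda c R^{2} < ((1-\xi)q + \xi)^{2}/(q^{2}\xi(1-\xi))$ appearing in the statement.

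The key computational step is to turn (ii) and (iii) into the claimed bounds on $\xi$. Using the critical relation $\lambda c R^{2} q = (1+x_{\pm})^{2}/x_{\pm}$ to simplify the interference term at $p = p_{\pm}$ gives
\[
h(p_{\pm}) \;=\; \ln\frac{\xi}{1-\xi} \;+\; \ln\frac{x_{\pm}}{q} \;+\; 1 + \frac{1}{x_{\pm}} \;+\; \theta R^{\alpha}\gamma^{-1}.
\]
I would then feed in the reciprocal identity $1/x_{\mp} = x_{\pm}$ together with the rationalizations $1+x_{\pm} = 1/\bigl(1/2 \mp \sqrt{1/4 - 1/(q\lambda c R^{2})}\bigr)$, so that $h(p_{-}) > 0$ becomes $\xi > \xi_{l}$ from \eqref{arrival_low} and $h(p_{+}) < 0$ becomes $\xi < \xi_{h}$ from \eqref{arrival_high}. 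This algebraic bookkeeping is the main obstacle: it is the combination of Vieta ($x_{-}x_{+}=1$) with the reciprocal of a half-plus/minus-square-root that lets the implicit inequalities $h(p_{\pm}) \gtrless 0$ telescope into the compact reciprocal-exponential forms in the statement. Once this is done, the "otherwise" claim comes for free, since either $\lambda c R^{2} \le 4/q$ forces $h'>0$ on $(0,\infty)$ and hence $h$ strictly monotone, or the failure of any of (i)--(iii) leaves a sign chart that admits only one zero of $h$ in $(0,1]$.
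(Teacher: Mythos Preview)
Your approach is essentially the paper's: take logarithms, reduce root-counting to a sign-chart analysis governed by the critical points of the log-transformed function, and translate the resulting inequalities into the stated bounds. Your substitution $x = pq(1-\xi)/\xi$ is a clean refinement of the paper's parameterization $(M,N)$; it turns the critical equation into the palindromic quadratic $x^{2}+(2-\lambda cR^{2}q)x+1=0$, so that Vieta gives $x_{-}x_{+}=1$ for free, which streamlines the final conversion of $h(p_{\pm})\gtrless 0$ into \eqref{arrival_low}--\eqref{arrival_high}. The paper instead works directly with $g(p)=-(p+N-M/2)^{2}+M^{2}/4-MN$ and carries the constants $M,N$ through.

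One genuine imprecision: your claim that condition (i), $p_{+}<1$, is \emph{equivalent} to $h'(1)>0$ is not correct as stated. Since $h'>0$ on $(0,p_{-})\cup(p_{+},\infty)$, the inequality $h'(1)>0$ is also satisfied when $1<p_{-}$, in which case $h$ is monotone on $(0,1)$ and has only one zero there. The paper does not skip this: in its Lemma~4 it separately imposes $p_{2}'<1$ (their inequality~(35), namely $\lambda cR^{2}<\tfrac{2}{q}+\tfrac{2}{\xi}-2$), then shows this extra bound is implied by the upper bound $\lambda cR^{2}<\frac{((1-\xi)q+\xi)^{2}}{q^{2}\xi(1-\xi)}$ once the lower bound forces $q>\xi/(1-\xi)$. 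In your variables this is the observation that $x_{-}<1$, so $q(1-\xi)/\xi<x_{-}$ would require $q<\xi/(1-\xi)$; you should add one line ruling out that regime, after which $h'(1)>0$ does collapse to $p_{+}<1$ as you need.
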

	\begin{proof}
		See Appendix \ref{studyProot}
	\end{proof}
	According to the approximate trajectory analysis \cite{6205590}, not all roots in \eqref{eq:p} are steady-state points. More precisely, we have the following:\\
	1) if \eqref{eq:p} has only one non-zero root $p_L$, then $p_L$ is a steady-state point. \\
	2) if \eqref{eq:p} has three non-zero roots $0<p_A\leq p_S \leq p_L\leq1$, only $p_L$ and $p_A$ are steady-state point.
	
	Corollary 1 further summarises the properties of the steady-state points with regard to the system parameters.
	
	\begin{corollary}
		The steady-state points $p_A$ and $p_L$ are monotonic decreasing functions of the node deployment density $\lambda$, the packet arrival rate $\xi$, and the transmission probability $q$,
	\end{corollary}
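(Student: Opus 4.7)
The plan is to apply the implicit function theorem to the fixed-point equation \eqref{eq:p}. Let $x$ denote any one of $\lambda$, $\xi$, $q$ (the other two being held fixed), and define
$$F(p;x) = p - \exp\left\{-\lambda c R^2 \frac{q\xi}{\xi + pq(1-\xi)} - \theta R^{\alpha}\gamma^{-1}\right\},$$
so that steady-state points are exactly the zeros of $F(\cdot;x)$. It then suffices to show that $\partial F/\partial p \geq 0$ at $p_A$ and $p_L$ while $\partial F/\partial x > 0$ for every parameter $x$, because the implicit function theorem will then yield $\partial p/\partial x = -(\partial F/\partial x)/(\partial F/\partial p) \leq 0$.

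The first step is to establish the sign of $\partial F/\partial p$ at the two stable roots. Writing $g(p)$ for the exponential term, so that $F(p;x)=p-g(p)$, a short differentiation shows that $g$ is strictly increasing in $p$, since the inner fraction $q\xi/(\xi+pq(1-\xi))$ decreases in $p$. Combined with the ordering from Theorem~\ref{Theorem_p_root}, the function $g(p)-p$ exhibits the sign pattern $+,-,+,-$ on the four intervals demarcated by $0,p_A,p_S,p_L,1$; hence at $p_A$ and $p_L$ the curve $g$ crosses the diagonal from above, forcing $g'(p)\leq 1$ and thus $\partial F/\partial p=1-g'(p)\geq 0$. The opposite inequality holds at $p_S$, which recovers its instability and is consistent with the trajectory analysis cited just before the corollary.

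The second step is to sign $\partial F/\partial x$ for each of the three parameters. Direct differentiation of the exponent shows that it is strictly decreasing in each of $\lambda$, $q$, and $\xi$; the $\xi$ case needs a small algebraic check because $\xi$ appears in both the numerator and denominator of the inner fraction, but one finds $\partial/\partial\xi\bigl[q\xi/(\xi+pq(1-\xi))\bigr] = pq^2/(\xi+pq(1-\xi))^2 > 0$, so the exponent still decreases in $\xi$. It follows that $\partial F/\partial x = -g(p)\cdot\partial(\text{exponent})/\partial x > 0$ strictly, since $g(p)>0$ and the derivative of the exponent is negative for each parameter.

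The main obstacle I anticipate is the bifurcation boundary, where $p_A$ coalesces with $p_S$ or $p_L$ with $p_S$; at such tangency points $\partial F/\partial p$ vanishes and the implicit function theorem breaks down. I would handle this by a continuity argument, extending the strict monotonicity obtained on the open three-root regime (and on the open single-root regime) to their common boundary via continuity of the steady-state root branches in the parameters, which yields the weak monotonicity asserted in the corollary.
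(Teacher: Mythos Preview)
Your proof is correct and uses the same underlying idea as the paper---implicit differentiation of the fixed-point equation---but the execution differs in a way worth noting. The paper computes $\partial p/\partial\xi$, $\partial p/\partial q$, $\partial p/\partial\lambda$ explicitly and observes that each has the auxiliary quadratic $g(p)$ from Appendix~\ref{studyProot} as a denominator; it then runs a multi-case analysis on the roots of $g$ (mirroring Lemma~\ref{lemma:gptfp}) to conclude $g(p_A)<0$ and $g(p_L)<0$. You bypass that case analysis entirely: the sign pattern of $F$ on the intervals delimited by $p_A,p_S,p_L$ follows directly from Theorem~\ref{Theorem_p_root} together with $F(0^+)<0$ and $F(1)>0$, and that alone forces $\partial F/\partial p\ge 0$ at the outer roots. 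The two arguments are equivalent at any root $p_0$, since there $f'(p_0)=-(1/p_0)\,\partial F/\partial p\big|_{p_0}$ in the paper's notation, so $g(p_0)<0\Leftrightarrow \partial F/\partial p\big|_{p_0}>0$; your version simply avoids introducing $f$ and $g$ and the associated casework. You are also more careful than the paper about the bifurcation boundary where two roots coalesce: there $\partial F/\partial p=0$ (equivalently $g=0$), and the paper's asserted strict inequality $g(p_L)<0$ would silently fail, whereas your continuity argument covers this cleanly.
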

	\begin{proof}
		See Appendix \ref{studyPLtrend}
	\end{proof}

	\section{Peak Age of Information}\label{PAoI}
	In this section, we first derive an expression of the peak AoI in the employed random-access Poisson network. Based on that, we then optimize the peak AoI by tuning the channel access probability $q$ and the packet arrival rate $\xi$.
	
	According to Fig. \ref{AoIcurve}, the $k^{th}$ packet's service time can be expressed as $T_k=t^{'}_k-t_k$, and the inter-departure time between the $(k-1)^{th}$ packet and $k^{th}$ packet can be written as $Y_k=t^{'}_k-t^{'}_{k-1}$. Then, the peak AoI can be expressed as $A_p=E[T_k]+E[Y_k]$. Lemma \ref{PAoIdef} gives an explicit expression for the peak AoI.
	\begin{lemma}\label{PAoIdef}
		The peak AoI $A_p$ can be written as:
		\begin{equation}\label{eq:PeakAge}
		A_p=E[T_{k-1}]+E[Y_k]=\frac{1}{\xi}+\frac{2}{qp}-1.
		\end{equation}
		\begin{proof}
			See Appendix \ref{proofPAoI}.
		\end{proof}
	\end{lemma}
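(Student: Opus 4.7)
The plan is to decompose the peak age into a service-time contribution from the previous packet plus an inter-departure contribution, and then evaluate each expectation by leveraging the i.i.d.\ structure supplied by the high-mobility random-walk assumption together with the per-slot success probability $p$ from Lemma~\ref{lemma:p}.

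First I would read the peak off the AoI recursion \eqref{define_aoi}. On failure slots the age increases by one, and on success slots it resets to $t-t_k+1$. The peak over the interval between the two consecutive successful departures at $t'_{k-1}$ and $t'_k$ is therefore attained immediately before the $k$th drop, and a short induction starting from $A(t'_{k-1}+1) = T_{k-1}+1$ and adding the $Y_k-1$ subsequent unit increments shows that the peak equals $T_{k-1}+Y_k$, where $T_{k-1}=t'_{k-1}-t_{k-1}$ is the service time of the $(k-1)$th packet and $Y_k=t'_k-t'_{k-1}$ is the inter-departure gap. Taking expectations yields $A_p = E[T_{k-1}] + E[Y_k]$, identifying the two quantities that remain to be computed.

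Next I would compute $E[T_{k-1}]$. Because the random-walk mobility renders the SINRs i.i.d.\ across slots, and the channel-access indicator is independently Bernoulli$(q)$, an in-service packet departs in each slot with probability $qp$ independently of the past. Hence $T_{k-1}$ is geometrically distributed with mean $1/(qp)$. For $E[Y_k]$ I would decompose $Y_k = W_k + T_k$, where $W_k = t_k - t'_{k-1}$ is the time from the $(k-1)$th departure until the next admitted arrival. Under the event ordering $\epsilon''\ll\epsilon'$, any Bernoulli$(\xi)$ arrival during slot $t'_{k-1}$ itself is blocked because the buffer is still occupied at the arrival instant; so the next admitted arrival is the first Bernoulli$(\xi)$ success in a fresh sequence of slots, giving a geometric waiting time. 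Combining this with the geometric law of $T_k$ and carefully accounting for the fact that the arrival slot of the $k$th packet is itself a valid first transmission opportunity gives $E[Y_k] = 1/\xi + 1/(qp) - 1$. Summing $E[T_{k-1}]$ and $E[Y_k]$ produces the claimed expression $A_p = 1/\xi + 2/(qp) - 1$.

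The main obstacle I anticipate is the discrete-time boundary bookkeeping: the final answer contains a single $-1$ correction, and I need to locate exactly where it comes from. I expect it to originate from the overlap between the arrival slot of a packet and its first transmission attempt, which is permitted by the $\epsilon''\ll\epsilon'$ ordering. Making this precise will require writing $T_k$ and $W_k$ as explicit functionals of the underlying Bernoulli sequences and verifying that the offset appears exactly once, rather than twice or not at all, when the two pieces are added.
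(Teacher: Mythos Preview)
Your decomposition is exactly the paper's: write the peak as $T_{k-1}+Y_k$, then split $Y_k$ into a waiting piece (your $W_k$, the paper's $Y_k^a$) and a service piece (your $T_k$, the paper's $Y_k^s$), and evaluate each as a geometric mean using the per-slot success probability $qp$ and the arrival probability $\xi$.

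The one place where you diverge from the paper is the location of the $-1$, and your tentative placement is the wrong one. You put the correction on the service side (``the arrival slot of the $k$th packet is itself a valid first transmission opportunity''), which would force $E[T_k]=\tfrac{1}{qp}-1$; but you have already asserted $E[T_{k-1}]=\tfrac{1}{qp}$, and $T_{k-1}$ and $T_k$ are identically distributed, so this is exactly the double-counting inconsistency you worried about in your last paragraph. The paper instead keeps both service times at mean $1/(qp)$ (``the time interval between the $k$th packet's arrival and departure costs at least one time slot'') and places the $-1$ in the waiting component, arguing that the departure at $t'_{k-1}+1-\epsilon'$ and the next admitted arrival at $(t'_{k-1}+1)+\epsilon''$ occur ``almost at the same time'', so that $E[Y_k^a]=\tfrac{1}{\xi}-1$. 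With that convention the three pieces add to $\tfrac{1}{qp}+\bigl(\tfrac{1}{\xi}-1\bigr)+\tfrac{1}{qp}$ and the single $-1$ appears exactly once.
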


	Lemma \ref{PAoIdef} shows that the peak AoI $A_p$ is affected by the channel access probability $q$ and the packet arrival rate $\xi$. Consequently, it is of great importance to explore how to properly tune the channel access probability $q$ and the packet arrival rate $\xi$ so as to minimize the peak AoI $A_p$, we then establish the following optimization problem
	\begin{equation}\label{eq:optimization}
	\begin{aligned}
	A_p^*= & \underset{\{   q, \xi\} }{\text{min}}   \quad A_p \\
	& \text{s.t. } \quad    q\in (0,1],\\
	&  \quad  \quad  ~ \xi \in (0,1].
	\end{aligned}
	\end{equation}
	
	In the following, by assuming $\xi$ is given, we start by the optimal tuning of the channel access probability $q$, and then by assuming $q$ is given, we focus on the optimal tuning of the packet arrival rate $\xi$. Finally, we study how to jointly tune $q$ and $\xi$ to address the optimization problem in \eqref{eq:optimization}.

	\subsection{Optimal Tuning of Channel Access Probability $q$}
	The following theorem presents the optimal channel access probability $q^\ast_\xi$ that minimizes the peak AoI $A_p$, i.e., $A^{q=q^\ast_\xi}_p=\underset{q}{\min}A_p$.
	
	\begin{theorem}\label{Theorem_OPtimalQ}
		Given a packet arrival rate $\xi$, the optimal peak AoI $A^{q=q^{*}_{\xi}}_p$ is given by
		\begin{equation}\label{eq:OptimalQ2}
		A_{p}^{q=q^{*}_{\xi}} =  \begin{cases}
		2\lambda c R^2\exp{\left\{\theta R^\alpha \gamma^{-1}+1\right\}}-\frac{1}{\xi}+1
		\quad  &\text{if } \lambda c R^2>1+\frac{{p}_{*}(1-\xi)}{\xi} \\
		\frac{1}{\xi}+\frac{2}{p^{*}}-1 \quad &\text{otherwise},\end{cases}
		\end{equation}
		which is acheived when the channel access probability $q$ is set to be
		\begin{equation}\label{eq:OptimalQ}%
		q=q^\ast_\xi =  \begin{cases}
		\frac{1}{\lambda cR^2-\frac{1-\xi}{\xi}\exp{\left\{-\theta R^\alpha \gamma^{-1}-1\right\}}} \quad  &\text{if } \lambda c R^2>1+\frac{{p}_{*}(1-\xi)}{\xi} \\
		1 \quad &\text{otherwise},
		\end{cases}
		\end{equation}
		where $p_{*}$ is the non-zero root of the following equation
		\begin{equation}
		p_{*}=\exp\left\{-\lambda c R^2 \frac{\xi}{\xi+p_{*}(1-\xi)}-\theta R^\alpha \gamma^{-1}\right\}.\label{eq:q1valuep}
		\end{equation}
		
	\end{theorem}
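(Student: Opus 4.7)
The plan is to reduce the one-dimensional minimization to a concave maximization in an auxiliary variable. Since $\xi$ is held fixed, Lemma~\ref{PAoIdef} shows that minimizing $A_p=1/\xi+2/(qp)-1$ over $q\in(0,1]$ is equivalent to maximizing the product $qp$ subject to the fixed-point constraint \eqref{eq:p}.

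The first step is to invert Lemma~\ref{lemma:p} to express $q$ as a function of $p$. Taking logarithms in \eqref{eq:p} and writing $L(p):=-\ln p-\theta R^\alpha\gamma^{-1}$, a direct rearrangement yields
\[
q=\frac{L(p)\,\xi}{\lambda cR^2\,\xi-L(p)\,p\,(1-\xi)},\qquad \frac{1}{qp}=\frac{\lambda cR^2}{L(p)\,p}-\frac{1-\xi}{\xi}.
\]
Consequently, minimizing $A_p$ is equivalent to maximizing $g(p):=p\,L(p)=p\bigl(-\ln p-\theta R^\alpha\gamma^{-1}\bigr)$ over the range of $p$-values attainable by some $q\in(0,1]$. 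A quick computation gives $g'(p)=-\ln p-\theta R^\alpha\gamma^{-1}-1$ and $g''(p)=-1/p<0$, so $g$ is strictly concave with unique unconstrained maximizer $\hat p:=\exp(-1-\theta R^\alpha\gamma^{-1})$.

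Next I would identify the feasible interval for $p$. Corollary~1 (applied to $q$) says that on the steady-state branch $p$ is strictly decreasing in $q$, so as $q$ traverses $(0,1]$ the variable $p$ traverses an interval whose lower endpoint is $p_*$, defined by \eqref{eq:q1valuep}. Because $g$ is unimodal, the constrained maximizer is $\hat p$ if $\hat p\geq p_*$ and $p_*$ otherwise. The equivalence $\hat p>p_*\iff \lambda cR^2>1+p_*(1-\xi)/\xi$ is obtained by inserting $\hat p$ into \eqref{eq:q1valuep}: the equality case reduces to $\lambda cR^2\,\xi/(\xi+p_*(1-\xi))=1$, and the monotonicity of the two sides in $\lambda$ (from Corollary~1) pins down the direction of the inequality. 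This recovers exactly the regime split stated in the theorem.

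Finally, substituting $p=\hat p$ (so that $L(\hat p)=1$) into the explicit expression for $q(p)$ yields the announced $q^{\ast}_\xi=1/\bigl(\lambda cR^2-\frac{1-\xi}{\xi}e^{-\theta R^\alpha\gamma^{-1}-1}\bigr)$, and plugging back into $A_p=1/\xi+2/(qp)-1$ and cancelling the common $(1-\xi)/\xi$ term produces $2\lambda cR^2\,e^{\theta R^\alpha\gamma^{-1}+1}-1/\xi+1$; the boundary regime is immediate from $q^{\ast}_\xi=1$, $p=p_*$. The main obstacle I anticipate is the feasibility step: Theorem~\ref{Theorem_p_root} warns that \eqref{eq:p} may admit up to three non-zero roots, so some care is needed to argue that restricting to the physically meaningful stable branch (i.e.\ $p_L$) still yields a monotone bijection between $q\in(0,1]$ and $p\in[p_*,p_0)$ with $p_0=e^{-\theta R^\alpha\gamma^{-1}}$; once this is established, the remaining algebra is a routine calculus exercise.
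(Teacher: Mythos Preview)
Your proposal is correct and takes a genuinely different route from the paper. The paper's proof (Appendix~\ref{prooftheorem2}) works directly in the variable $q$: it computes $\partial A_p/\partial q$ by implicit differentiation of \eqref{eq:p}, checks that this derivative is negative as $q\to 0$, evaluates its sign at $q=1$ in terms of $p_*$, and then---in the interior regime---solves the system $\{\partial A_p/\partial q=0,\ \eqref{eq:p}\}$ to extract $q^\ast_\xi$. Your approach instead inverts the fixed-point relation to write $q=q(p)$, recognizes that minimizing $A_p$ amounts to maximizing the strictly concave function $g(p)=p(-\ln p-\theta R^\alpha\gamma^{-1})$, and then reads off the constrained maximizer on the interval $[p_*,p_0)$ traced out by the $p_L$-branch.

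What each buys: your reparametrization yields unimodality of $A_p$ in $q$ for free (via concavity of $g$ plus the monotone correspondence $q\leftrightarrow p$), whereas the paper's endpoint-sign argument establishes only that an interior critical point exists when $\lambda cR^2>1+p_*(1-\xi)/\xi$, without explicitly ruling out multiple critical points. Your derivation of the regime split ($\hat p\ge p_*\iff L(p_*)\ge 1\iff \lambda cR^2\ge 1+p_*(1-\xi)/\xi$) is also more transparent than the paper's manipulation of the limiting derivative. Conversely, the paper's approach avoids having to argue that the inversion $p\mapsto q(p)$ is globally well-defined; you correctly flag this as the delicate point, and indeed both proofs ultimately rest on the same monotonicity fact (Corollary~1 restricted to the $p_L$ branch) to sidestep the multi-root issue of Theorem~\ref{Theorem_p_root}.
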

	\begin{proof}
		See Appendix \ref{prooftheorem2}
	\end{proof}
	Theorem \ref{Theorem_OPtimalQ} shows that the optimal channel access probability $q^\ast_\xi=1$ when $\lambda c R^2\leq 1+\frac{{p}_{*}(1-\xi)}{\xi}$, indicating that in this case, each node would transmit its packet as long as the buffer is nonempty. As the node deployment density $\lambda$, the distance between each TX-RX distance $R$ or the decoding threshold $\theta$ (equivalently, $c$ according to \eqref{eq:p}) grows, we have $q^\ast_\xi<1$ due to either mounting channel contention or a lower chance of successful packet decoding.%

	To take a closer look at Theorem \ref{Theorem_OPtimalQ}, Fig. \ref{fig:OptimalQandPage} demonstrates how the optimal channel access probability $q^\ast_\xi$ and the corresponding the peak AoI $A^{q=q^\ast_\xi}_p$ vary with the node deployment density $\lambda$ under different values of the packet arrival rate $\xi$. It can be seen that when $\lambda$ is small, e.g., $\lambda=0.02$, the optimal channel access probability $q^\ast_\xi=1$ regardless of the value of the packet arrival rate $\xi$. Yet, the peak AoI $A^{q=q^\ast_\xi}_p$ crucially depends on $\xi$. Intuitively, a smaller node deployment density can reduce the interference among transmitter-receiver pairs, which improves the probability of successful packet transmission. Accordingly, the age performance could be effectively improved with more frequent updates, i.e., a larger packet arrival rate $\xi$. Therefore, as shown in Fig. \ref{fig:OptimalQandPage}b, $A^{q=q^\ast_\xi}_p$ with $\xi=0.9$ is lower than those with $\xi=0.6$ or $\xi=0.3$ when the node deployment density $\lambda$ is small. On the other hand, if the node deployment density $\lambda$ grows, then to relieve the channel contention, the system should reduce the channel access probability. Thus, we can see $q^\ast_\xi$ decreases with $\lambda$ and the descent position, i.e., the starting point that $q^\ast_\xi<1$, is positively correlated with the packet arrival rate $\xi$. In this case, it is interesting to observe that the peak AoI $A^{q=q^\ast_\xi}_p$ can be benefited with a lower packet arrival rate $\xi$, which is in sharp contrast to the case where the node deployment density $\lambda$ is small.
	\begin{figure}[t]%
		\begin{minipage}[t]{0.5\textwidth}
			\includegraphics[width=8cm,height=6.73cm]{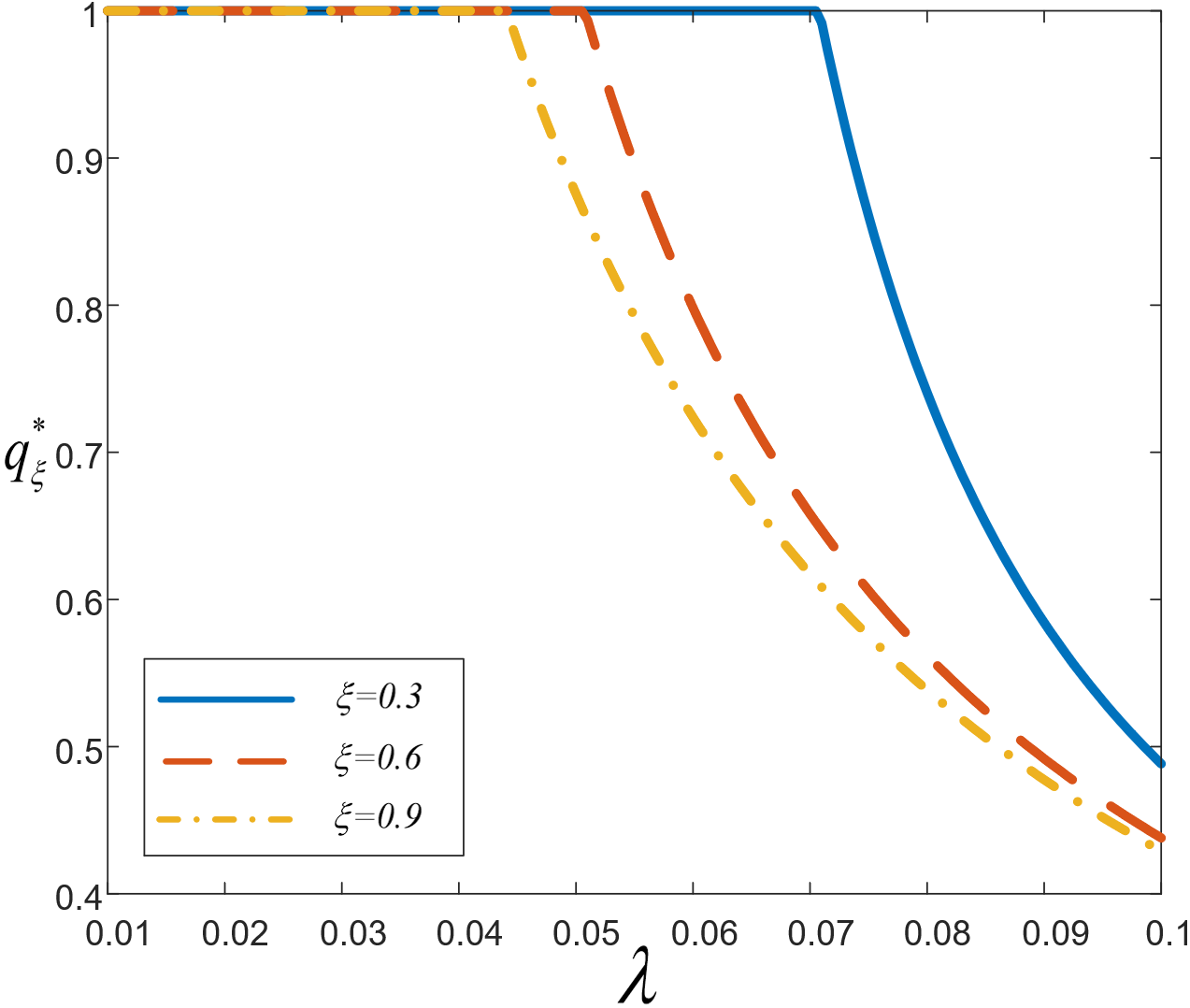}
			\centering{(a)}
			\label{fig:optimal:q:lambda}
		\end{minipage}%
		\begin{minipage}[t]{0.5\textwidth}
			\includegraphics[width=8cm,height=6.7cm]{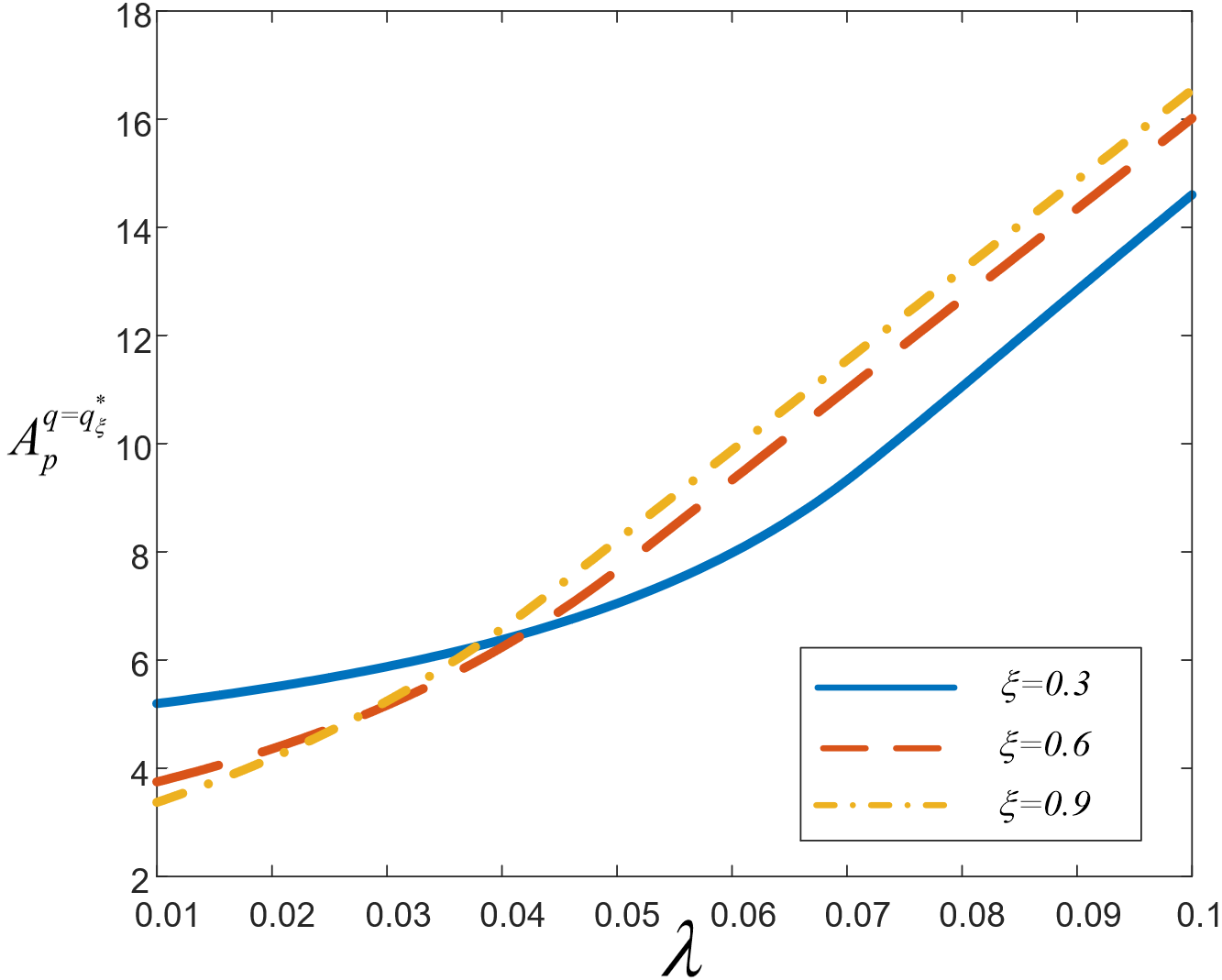}
			\centering{(b)}
			\label{fig:optimal:q:pAoi}
		\end{minipage}
		\caption{ Optimal channel access probability $q^\ast_\xi$ and the corresponding the peak AoI $A^{q=q^\ast_\xi}_p$ versus the node deployment density $\lambda$. $\alpha=3$, $\theta=0.2$, $\gamma=20$, $R=3$. $\xi\in\{0.3,0.6,0.9\}$. (a) $q^\ast_\xi$ versus $\lambda$. (b) $A^{q=q^\ast_\xi}_p$ versus $\lambda$.} \label{fig:OptimalQandPage}
	\end{figure}

	\begin{figure}[t]%
		\begin{minipage}[t]{0.5\textwidth}
			\includegraphics[width=8cm,height=6.85cm]{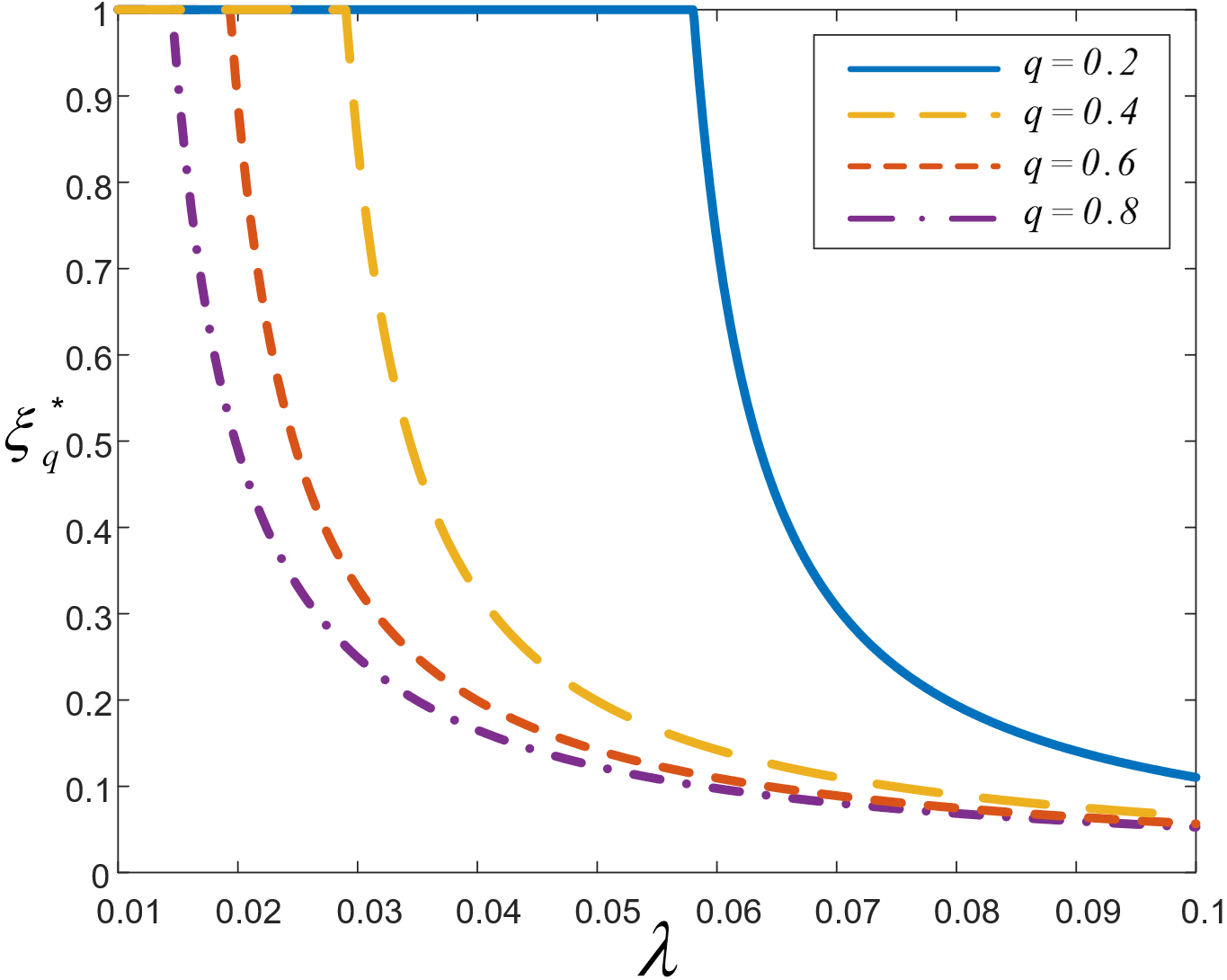}
			\label{fig:optimal:alpha:lambda}
			\centering{(a)}
		\end{minipage}%
		\begin{minipage}[t]{0.5\textwidth}
			\includegraphics[width=8.2cm,height=6.85cm]{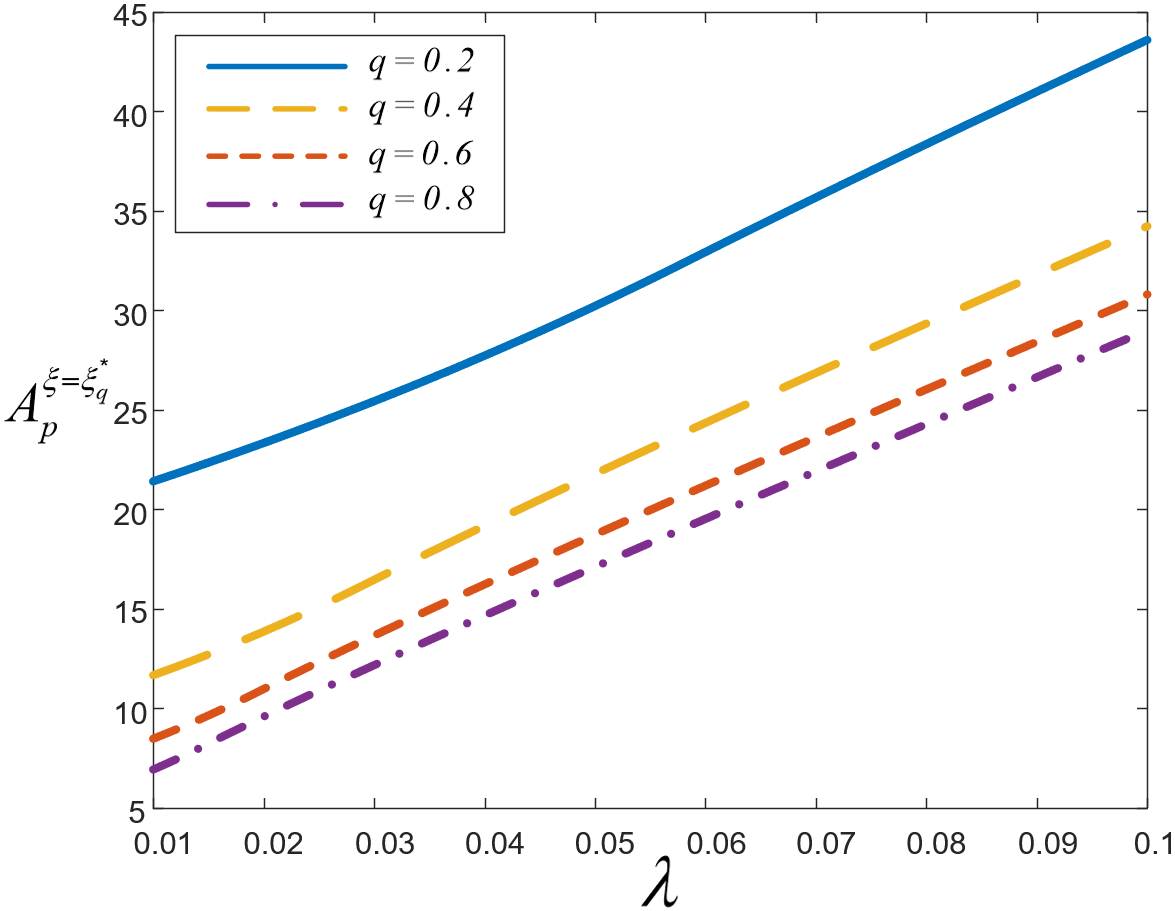}
			\label{fig:optimal:alpha:lambda:PAoI}
			\centering{(b)}
		\end{minipage}
		\caption{Optimal channel access probability $\xi^\ast_q$ and the corresponding the peak AoI $A^{\xi=\xi^\ast_q}_p$ versus the node deployment density $\lambda$. $\alpha=3$, $\theta=0.5$, $\gamma=20$, $R=3$. $q\in\{0.2,0.4,0.6,0.8\}$. (a) $\xi^\ast_q$ versus $\lambda$. (b) $A^{\xi=\xi^\ast_q}_p$ versus $\lambda$.}
		\label{fig:optimal_q_Ap}
	\end{figure}%
	\subsection{Optimal Tuning of Packet Arrival Rate $\xi$}
	The following theorem presents the optimal packet arrival rate $\xi^\ast_q$ that minimizes the peak AoI $A_p$, i.e., $A^{\xi=\xi^\ast_q}_p=\underset{\xi}{\min}A_p$.
	\begin{theorem}\label{Theorem_OPtimalAlpha}
		Given a channel access probability  $q$, the optimal peak AoI $A^{\xi=\xi^\ast_q}_p$ is given by
		\begin{equation}\label{eq:OPtimalAlpha2}
		A_{p}^{\xi=\xi^{*}_{q}} =  \begin{cases}
		\frac{q\lambda cR^2\left(\sqrt{1+\frac{4}{q\lambda cR^2}}+1\right)+2}{2q \exp{\left\{-\frac{2}{\sqrt{1+\frac{4}{q\lambda cR^2}}+1}-\theta R^\alpha \gamma^{-1}\right\}}} \quad &\text{if } \lambda c R^2>\frac{1}{2q} \\
		\frac{2}{q}\exp{\left\{\lambda cR^2q+\theta R^\alpha \gamma^{-1}\right\}} \quad &\text{otherwise},\end{cases}		
		\end{equation}
		which is acheived when the packet arrival rate $\xi$ is set to be
		\begin{equation}\label{eq:OPtimalAlpha}%
		\xi=\xi_q^{*} =  \begin{cases}
		\frac{2q \exp{\left\{-\frac{2}{\sqrt{1+\frac{4}{q\lambda cR^2}}+1}-\theta R^\alpha \gamma^{-1}\right\}}}{q\lambda cR^2\left(\sqrt{1+\frac{4}{q\lambda cR^2}}+1\right)+2q\exp{\left\{-\frac{2}{\sqrt{1+\frac{4}{q\lambda cR^2}}+1}-\theta R^\alpha \gamma^{-1}\right\}}-2}  &\text{if } \lambda c R^2>\frac{1}{2q} \\
		1 \quad &\text{otherwise}.
		\end{cases}
		\end{equation}	
	\end{theorem}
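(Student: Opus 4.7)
The plan is to eliminate one of the two variables by substituting the fixed-point equation from Lemma~\ref{lemma:p} into the peak-AoI expression from Lemma~\ref{PAoIdef}, thereby reducing the minimization to a one-dimensional calculus problem. Taking the logarithm of \eqref{eq:p} and rearranging isolates $1/\xi$ purely in terms of $p$: with the shorthand $K=\lambda cR^2$ and $p_0=\exp(-\theta R^\alpha\gamma^{-1})$, one obtains
\begin{equation*}
\frac{1}{\xi}=\frac{K}{p\ln(p_0/p)}-\frac{1}{pq}+1.
\end{equation*}
Substituting this into $A_p=\frac{1}{\xi}+\frac{2}{qp}-1$ collapses the expression to
\begin{equation*}
A_p(p)=\frac{K}{p\ln(p_0/p)}+\frac{1}{pq},
\end{equation*}
a single-variable function. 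By Corollary~1, $p$ is monotone in $\xi$, so as $\xi$ sweeps $(0,1]$ the steady-state root $p_L$ sweeps the interval $[p_0e^{-Kq},p_0)$; equivalently, with $y=\ln(p_0/p)$, the feasible set is $y\in(0,Kq]$.

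Next, I would minimize $A_p$ by rewriting it as $\frac{e^y}{p_0}\!\left(\frac{K}{y}+\frac{1}{q}\right)$ and differentiating with respect to $y$. Setting the derivative to zero gives the quadratic $y^2+Kqy-Kq=0$, whose unique positive root is
\begin{equation*}
y^\ast=\tfrac{1}{2}Kq\bigl(\sqrt{1+4/(Kq)}-1\bigr)=\tfrac{2}{\sqrt{1+4/(Kq)}+1},
\end{equation*}
the second form obtained by rationalization and matching the exponent appearing in the theorem's statement. A sign analysis of $A_p'(y)$ shows $A_p$ is decreasing on $(0,y^\ast)$ and increasing on $(y^\ast,\infty)$, confirming $y^\ast$ is a global minimizer on $\mathbb{R}_{>0}$.

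I would then determine when this critical point falls inside the feasible interval $(0,Kq]$: the inequality $y^\ast\le Kq$ simplifies (using $\sqrt{1+4/(Kq)}\le 3$) to $Kq\ge \tfrac{1}{2}$, i.e.\ $\lambda cR^2>\tfrac{1}{2q}$. Under this condition the unconstrained optimum is feasible and yields the first branch of \eqref{eq:OPtimalAlpha2}; the algebraic identity $Kq+y^\ast=\tfrac{1}{2}Kq(\sqrt{1+4/(Kq)}+1)$ (derivable from the quadratic) together with $p^\ast=p_0e^{-y^\ast}$ gives the stated closed form for $A_p^{\xi=\xi_q^\ast}$, while plugging $(p^\ast,y^\ast)$ back into the expression for $1/\xi$ and clearing denominators by a factor of $(\sqrt{1+4/(Kq)}+1)/2$ produces the formula for $\xi_q^\ast$ in \eqref{eq:OPtimalAlpha}. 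When $\lambda cR^2\le\tfrac{1}{2q}$, $A_p$ is strictly decreasing throughout $(0,Kq]$, so the minimum occurs at the boundary $y=Kq$, i.e.\ $\xi=1$; direct evaluation of Lemma~\ref{lemma:p} at $\xi=1$ gives $p=\exp(-Kq-\theta R^\alpha\gamma^{-1})$, from which $A_p=\tfrac{2}{q}\exp(\lambda cR^2q+\theta R^\alpha\gamma^{-1})$, matching the second branch.

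The main obstacle I anticipate is twofold: first, ensuring the one-to-one substitution between $\xi$ and $p$ is legitimate, which rests on using the steady-state root $p_L$ identified in Theorem~\ref{Theorem_p_root} and invoking the monotonicity from Corollary~1 to pin down the feasible range of $y$; second, the algebraic reshuffling required to convert the critical-point expressions into the two forms appearing in the theorem, particularly verifying the identity $Kq+y^\ast+1=1/(1-y^\ast)$ (which follows from $y^{\ast 2}=Kq(1-y^\ast)$) that underlies the clean form of $A_p^{\xi=\xi_q^\ast}$. Once these are in hand, the remainder of the argument is routine differentiation and case analysis.
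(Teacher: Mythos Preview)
Your proposal is correct and takes a genuinely different route from the paper's proof. The paper works directly in the variable~$\xi$: it computes $\partial A_p/\partial\xi$ by implicit differentiation of the fixed-point equation~\eqref{eq:p}, checks that this derivative is negative as $\xi\to 0$ and equals $2q\lambda cR^2-1$ as $\xi\to 1$, and then solves the system $\{\partial A_p/\partial\xi=0,\ \text{fixed-point equation}\}$ to extract $\xi_q^\ast$ (the algebra of that last step is not shown). Your approach instead inverts the fixed-point relation to express $1/\xi$ in terms of $p$, which collapses $A_p$ to an explicit one-variable function of $y=\ln(p_0/p)$; the first-order condition then becomes the quadratic $y^2+Kqy-Kq=0$, whose positive root and feasibility condition $y^\ast\le Kq\Leftrightarrow Kq\ge 1/2$ fall out directly. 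What your route buys is transparency: the quadratic immediately certifies uniqueness of the interior critical point and its global-minimum nature via a sign analysis of $A_p'(y)$, whereas the paper's boundary-sign argument only guarantees existence of an interior critical point, not uniqueness. What the paper's route buys is that it avoids having to justify the change of variables $\xi\leftrightarrow y$; you handle this correctly by invoking the monotonicity of $p_L$ in $\xi$ from Corollary~1, which pins down the feasible range $y\in(0,Kq]$. Both approaches gloss over which steady-state root is selected when \eqref{eq:p} has three, so neither is at a disadvantage there.
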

	\begin{proof}
		See Appendix \ref{prooftheorem3}
	\end{proof}%
	
	Theorem \ref{Theorem_OPtimalAlpha} reveals that the optimal packet arrival rate $\xi^\ast_q=1$ when $\lambda c R^2<\frac{1}{2 q}$, indicating that in this case, to minimize the peak AoI, new packets shall be updated as frequent as possible. Similarly to Theorem \ref{Theorem_OPtimalQ}, as $\lambda$, $R$ or $c$ grows, due to mounting channel contention or a lower probability of successful transmission, the optimal packet arrival rate $\xi^\ast_q<1$.%
	
	Fig. \ref{fig:optimal_q_Ap} demonstrates how the optimal packet arrival rate $\xi^\ast_q$ and the corresponding the peak AoI $A^{\xi=\xi^\ast_q}_p$ varies with the node deployment density $\lambda$ under various value of the channel access probability.
	Intuitively, as the node deployment density $\lambda$ increases, to reduce the interference, the system should either reduce the channel access probability $q$ or the packet arrival rate $\xi$. Accordingly, we can see from Fig. \ref{fig:optimal_q_Ap}a that as $\lambda$ increases, the optimal packet arrival rate $\xi^\ast_q$ declines, which could be further reduced with a larger channel access probability $q$. On the other hand, as shown in Fig. \ref{fig:optimal_q_Ap}b, the corresponding the peak AoI $A^{\xi=\xi^\ast_q}_p$ grows with the node deployment density $\lambda$, which is intuitively clear. Yet, in contract to that in Fig. \ref{fig:OptimalQandPage}b, a larger channel access probability $q$ always leads to a smaller $A^{\xi=\xi^\ast_q}_p$.
	\subsection{Joint Tuning of $q$ and $\xi$}
	
	The following theorem presents the result of joint tuning of the probability of successful transmission $q$ and the packet arrival rate $\xi$ for minimizing the peak AoI $A_p$, i.e., $A^{*}_p=\underset{\{q,  \xi\}}{\min} A_p.$
	
	\begin{theorem}\label{Theorem_OPtimalqAlphaboth}
		The optimal peak AoI $A^{*}_p=\underset{\{q, \xi\}}{\min}~A_p$ is given by
		\begin{equation}\label{eq:OPtimalAlpha2}
		A_{p}^{*} =  \begin{cases}\frac{\lambda cR^2\left(\sqrt{1+\frac{4}{\lambda cR^2}}+1\right)+2}{2\exp{\left\{-\frac{2}{\sqrt{1+\frac{4}{\lambda cR^2}}+1}-\theta R^\alpha \gamma^{-1}\right\}}},
		\quad &\text{if } \lambda c R^2>\frac{1}{2} \\
		2\exp{\left\{\lambda cR^2+\theta R^\alpha \gamma^{-1}\right\}} \quad &\text{otherwise},\end{cases}
		\end{equation}
		which is achieved when the channel access probability $q$ is set to be
		\begin{equation}
		q=q^{*}=1,
		\end{equation}
		and the packet arrival rate $\xi$ is set to be
		\begin{equation}\label{eq:OPtimalAlpha2}
		\xi=\xi^{*} =  \begin{cases}
		\frac{2 \exp{\left\{-\frac{2}{\sqrt{1+\frac{4}{\lambda cR^2}}+1}-\theta R^\alpha \gamma^{-1}\right\}}}{\lambda cR^2\left(\sqrt{(1+\frac{4}{\lambda cR^2}}+1\right)+2\exp{\left\{-\frac{2}{\sqrt{1+\frac{4}{\lambda cR^2}}+1}-\theta R^\alpha \gamma^{-1}\right\}}-2}  &\text{if } \lambda c R^2>\frac{1}{2} \\
		1 \quad &\text{otherwise}.
		\end{cases}
		\end{equation}
		\begin{proof}
			See Appendix \ref{prooftheorem4}
		\end{proof}%
	\end{theorem}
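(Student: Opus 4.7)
The plan is to reduce the joint optimization to a one-variable problem by piggybacking on Theorem \ref{Theorem_OPtimalAlpha}, which already resolves the inner minimization over $\xi$. Concretely, I would write
\begin{equation*}
A_p^* \;=\; \underset{q,\xi}{\min}\, A_p \;=\; \underset{q\in(0,1]}{\min}\,\underset{\xi\in(0,1]}{\min}\, A_p \;=\; \underset{q\in(0,1]}{\min}\, A_p^{\xi=\xi_q^*}(q),
\end{equation*}
so that the task is to minimize the piecewise function $A_p^{\xi=\xi_q^*}(q)$ given in \eqref{eq:OPtimalAlpha2} of Theorem \ref{Theorem_OPtimalAlpha} over the unit interval.

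Next I would analyze the two regimes separately. In the \emph{light-contention} regime $\lambda c R^2\leq 1/(2q)$, the inner optimizer is $\xi_q^*=1$ and $A_p^{\xi=\xi_q^*}(q)=\tfrac{2}{q}\exp\{q\lambda cR^2+\theta R^\alpha\gamma^{-1}\}$. A direct logarithmic derivative gives
\begin{equation*}
\frac{d}{dq}\ln A_p^{\xi=\xi_q^*}(q) \;=\; \lambda c R^2 - \frac{1}{q} \;=\; \frac{q\lambda c R^2-1}{q},
\end{equation*}
which is strictly negative on this regime since $q\lambda cR^2\leq 1/2<1$. Hence the objective is strictly decreasing in $q$, pushing the optimizer toward $q=1$.

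The main obstacle is the \emph{heavy-contention} regime $\lambda c R^2>1/(2q)$, where the formula is much less transparent. Here I would introduce the substitution $v=v(q):=\sqrt{1+4/(q\lambda cR^2)}$ (equivalently $s=2/(v+1)$), under which one can show by a short calculation that $A_p^{\xi=\xi_q^*}(q)=\frac{e^s}{q(1-s)}\exp(\theta R^\alpha\gamma^{-1})$, together with $ds/dq=(v-1)/[qv(v+1)]$ and $(2-s)/(1-s)=2v/(v-1)$. Combining these yields the clean identity
\begin{equation*}
\frac{d}{dq}\ln A_p^{\xi=\xi_q^*}(q) \;=\; \frac{2-s}{1-s}\,\frac{ds}{dq} - \frac{1}{q} \;=\; \frac{s-1}{q} \;<\; 0,
\end{equation*}
again proving strict monotonic decrease. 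Together the two regimes show $A_p^{\xi=\xi_q^*}(q)$ is strictly decreasing throughout $(0,1]$, so the minimum is attained at $q^*=1$.

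Finally, I would substitute $q=1$ into the expressions in Theorem \ref{Theorem_OPtimalAlpha}. The regime boundary $\lambda c R^2>1/(2q)$ collapses to $\lambda c R^2>1/2$, and the formulas for $\xi_q^*$ and $A_p^{\xi=\xi_q^*}$ evaluated at $q=1$ reproduce verbatim the expressions claimed for $\xi^*$ and $A_p^*$ in Theorem \ref{Theorem_OPtimalqAlphaboth}. The only delicate step in the whole argument is the algebraic simplification that collapses the derivative in the heavy-contention regime to $(s-1)/q$; the rest is bookkeeping.
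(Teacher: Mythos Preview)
Your proposal is correct and follows the same route as the paper: reduce to $\min_{q}A_p^{\xi=\xi_q^*}$ via Theorem~\ref{Theorem_OPtimalAlpha}, treat the two regimes separately using the substitution $s=2/\bigl(1+\sqrt{1+4/(q\lambda cR^2)}\bigr)$ (the paper uses $k=-s$), conclude monotone decrease in $q$, and set $q^*=1$. Two small remarks: first, the paper also checks continuity of $A_p^{\xi=\xi_q^*}$ at the regime boundary $q=1/(2\lambda cR^2)$, which you should add so that the two strictly decreasing pieces glue into a globally decreasing function; second, your explicit logarithmic-derivative identity $(d/dq)\ln A_p^{\xi=\xi_q^*}=(s-1)/q$ is actually cleaner than the paper's argument in the heavy-contention regime, where the paper reasons via the chain ``$k$ decreases in $q$ and $1/(e^k(k+1))$ decreases in $k$''---as written that chain does not close, since the factor $1/q$ and the factor $1/(e^k(k+1))$ move in opposite directions with $q$.
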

	
	\begin{figure}[t]
		\begin{minipage}[t]{0.5\textwidth}
			\includegraphics[width=7.8cm,height=7.10cm]{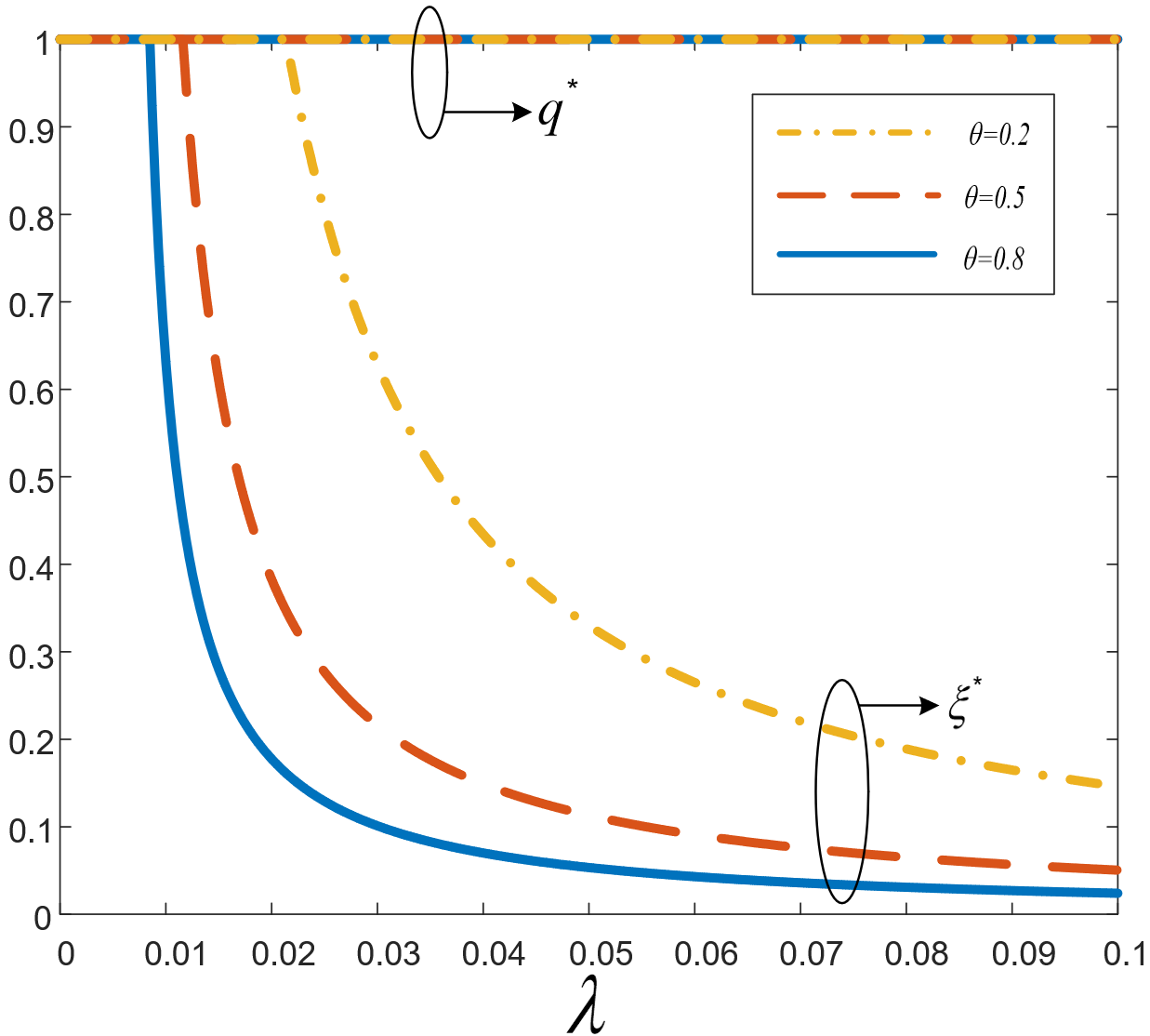}
			\label{fig:optimal:alphaq:lambda}
			\centering{(a)}
		\end{minipage}%
		\begin{minipage}[t]{0.5\textwidth}
			\includegraphics[width=8.5cm,height=7.08cm]{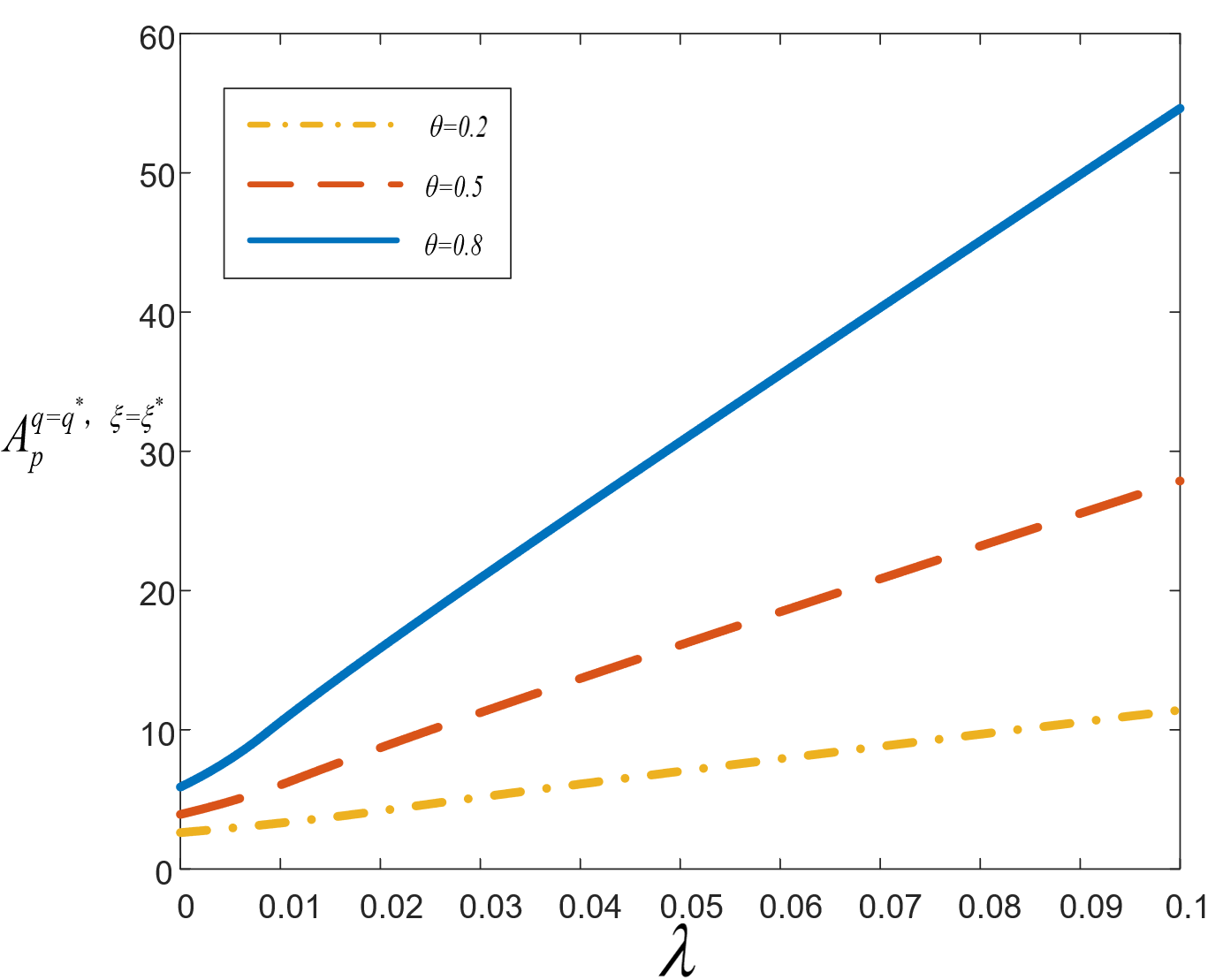}
			\label{fig:optimal:alphaq:lambda:PAoI}
			\centering{(b)}
		\end{minipage}
		\caption{Optimal the channel access probability $q^{\ast}$, the packet arrival rate $\xi^{\ast}$ and the corresponding peak AoI $A^{\xi=\xi^{\ast},q=q^{\ast}}_p$ versus the node deployment density $\lambda$. $\alpha=3$, $\gamma=20$, $R=3$, $\theta\in\{0.2,0.5,0.8\}$. (a) $\xi^{\ast}$, $q^{\ast}$ versus $\lambda$. (b) $A^{\xi=\xi^{\ast},q=q^{\ast}}_p$ versus $\lambda$.}
		\label{fig:optimal_q_alpha_both}
	\end{figure}
	In Fig. \ref{fig:optimal_q_alpha_both}, which demonstrate how the optimal channel access probability $q^\ast$, the optimal packet arrival rate $\xi^\ast$ and the corresponding minimum peak AoI $A^\ast_p$ vary with the node deployment density $\lambda$.
	It is clear from Fig. \ref{fig:optimal_q_alpha_both}  that the optimal packet arrival rate $\xi^\ast=1$ when $\lambda c R^2<\frac{1}{2}$, indicating that in this case, to minimize the peak AoI, packets should arrival at the system as many as possible. Similarly to Theorem \ref{Theorem_OPtimalQ}, as $\lambda$, $R$ or $c$ grows, due to mounting channel contention or a lower probability of successful transmission, the optimal packet arrival rate $\xi^\ast<1$. In such a jointly optimization, it is interesting to see that we always have $q^{\ast}=1$ whatever the value of $\lambda$ is. Intuitively, as $\lambda$ increases, to reduce the channel contention, the system should decrease both $q$ and $\xi$. Yet, the shorter a period the packet stays in the buffer, the lower AoI will be when it is updated. Accordingly, the system keeps the optimal channel access probability $q^\ast=1$ while reduces the packet arrival rate $\xi^\ast$ only.

	\section{Simulation Results and Discussions}\label{Simulation Results}
	In this section, we provide simulation results to validate the analysis and further shed light on AoI minimum network designs. Specifically, in the begining of each simulation run, we realize the locations of transmitter-receiver pairs over a $100\times100~m^2$ square area according to independent PPPs and place the typical link where the recevier is located at the center of the area. In each timeslot, the location of each pair is shifted except for the typical link, and each simulation last for $10^5$ time slots. In each realization, the simulated peak AoI is caculated as the sum of the peak value of AoI curve to the number of successful transmissions of the typical link. To obtain the simulated mean peak AoI, we average over 20 realizations.
	
	\begin{figure}[t]
		\centering
		\includegraphics[width=8.66cm,height=7cm]{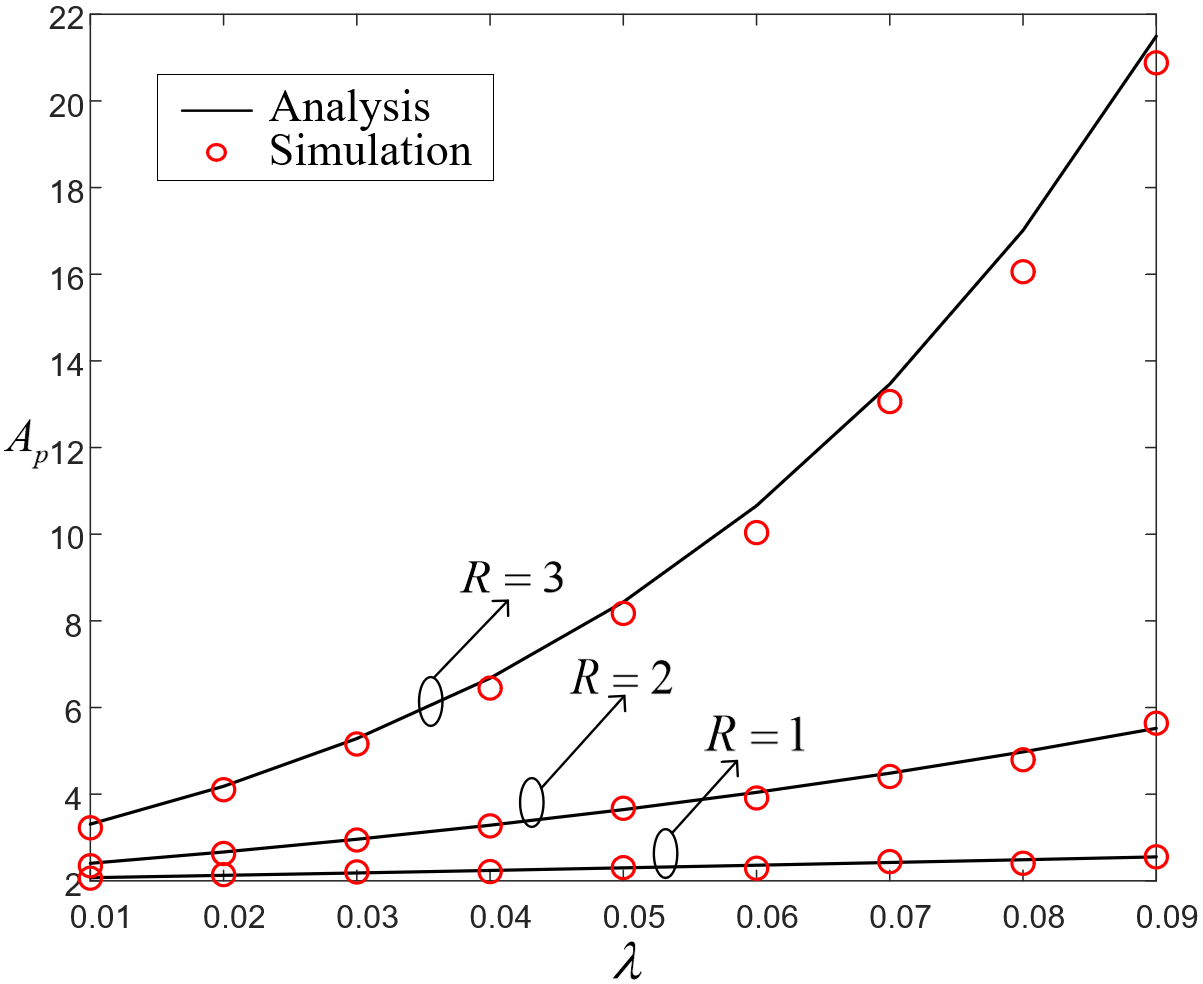}
		\caption{Peak Age of Information $A_p$ versus the node deployment density $\lambda$. $\alpha=3$, $\theta=0.2$, $q=1$, $\xi=1$, $\gamma=20$, $R\in\{1,2,3\}$.}
		\label{PAoI_density}
	\end{figure}
	
	Fig. \ref{PAoI_density} illustrates how the peak AoI $A_p$ varies with the node deployment density $\lambda$ under different TX-RX distances. From this figure, we can see that the simulation results match well with the analysis, which verify the accuracy of Theorem \ref{PAoIdef}. Moreover, following the developed analysis, we know that as the node deployment density goes up, the interference among the wireless link become more severe, leading to a smaller probability of successful transmissions that deteriorates the age performance. Accordingly, we can see from Fig. \ref{PAoI_density} that the peak AoI $A_p$ increases as
	$\lambda$ increases. By reducing the TX-RX distance of each pair the SINR can be improved. As Fig. \ref{PAoI_density} illustates, when the distance of each pair is reduced to be $R=1$, the peak AoI $A_p$ becomes less sensitive to the variation of the node deployment density $\lambda$.
	\begin{figure}[t]
		\centering
		\includegraphics[width=8.2cm,height=6.9cm]{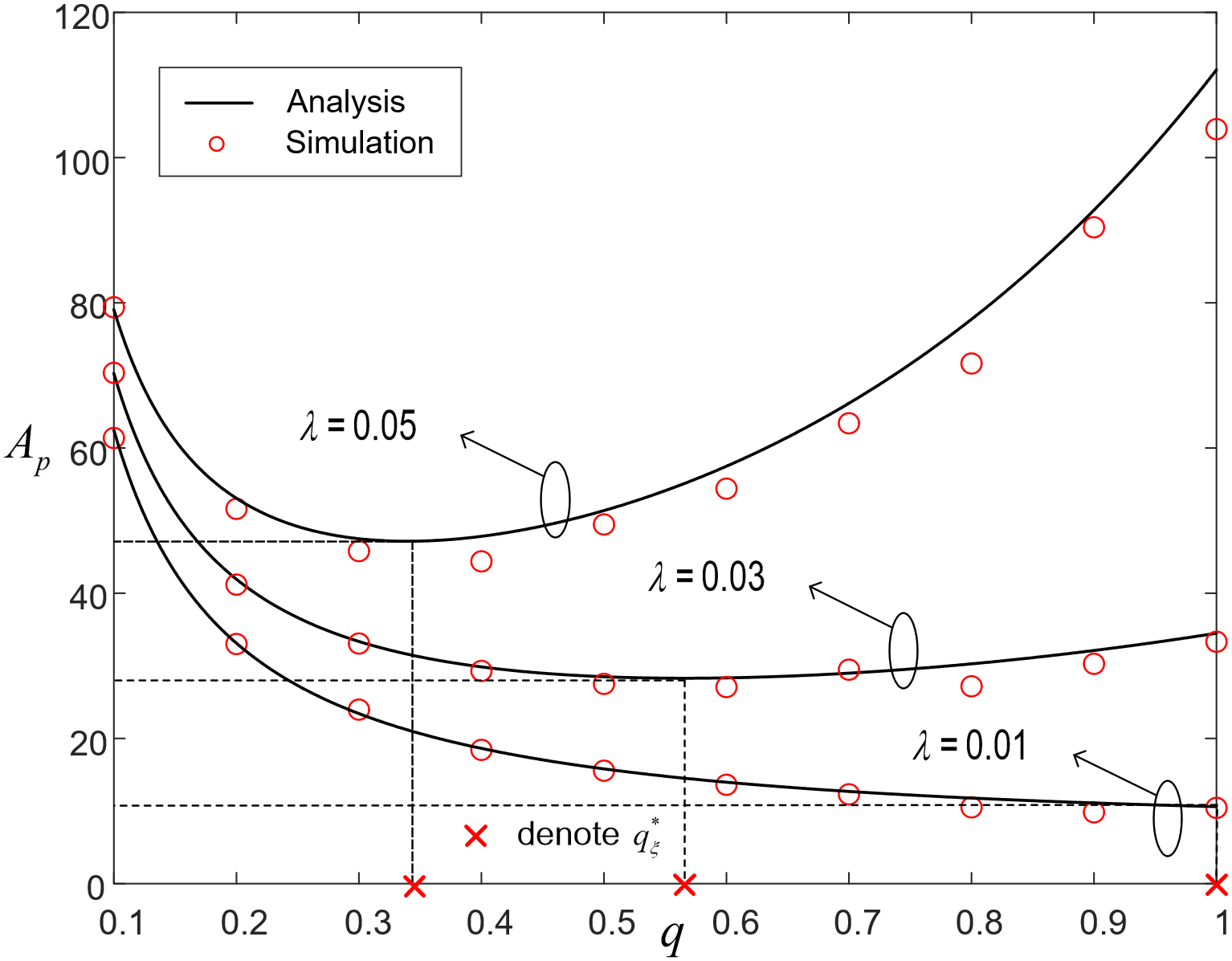}
		\caption{Peak Age of Information $A_p$ versus the channel access probability $q$. $R=3$, $\alpha=3$, $\theta=0.8$, $ \xi =1$, $\gamma=20$, $\lambda\in\{0.01,0.03,0.05\}$.}
		\label{PAoI_transmission}
	\end{figure}
	
	Fig. \ref{PAoI_transmission} depicts the peak AoI $A_p$ as a function of the channel access probability $q$ under various values of the node deployment density $\lambda$. It can be seen that when the node deployment density $\lambda$ is small, $A_p$ monotonically decreases with respect to $q$ and reaches the smallest value when $q=1$. Note that an increase in the channel access probability has two opposite effects on the peak AoI. On the one side, a larger channel access probability $q$ results in a short waiting time in the buffer which reduces the staleness of information packet. On the other side, a larger channel access probability $q$ can also incur severe interference due to a high offered loads of each transmitter's queue, leading to a lower probability of successful transmission. As Fig. \ref{PAoI_transmission} illustrates, the node deployment density $\lambda$ determines how these two effects trade off with each other. In particular, with a small node deployment density, the interference would not become severe even the channel access probability $q$ is large. With a large $\lambda$, on the other hand, the interference can deteriorate the AoI if $q$ is large.
	\begin{figure}[t]
		\centering
		\includegraphics[width=8cm,height=6.5cm]{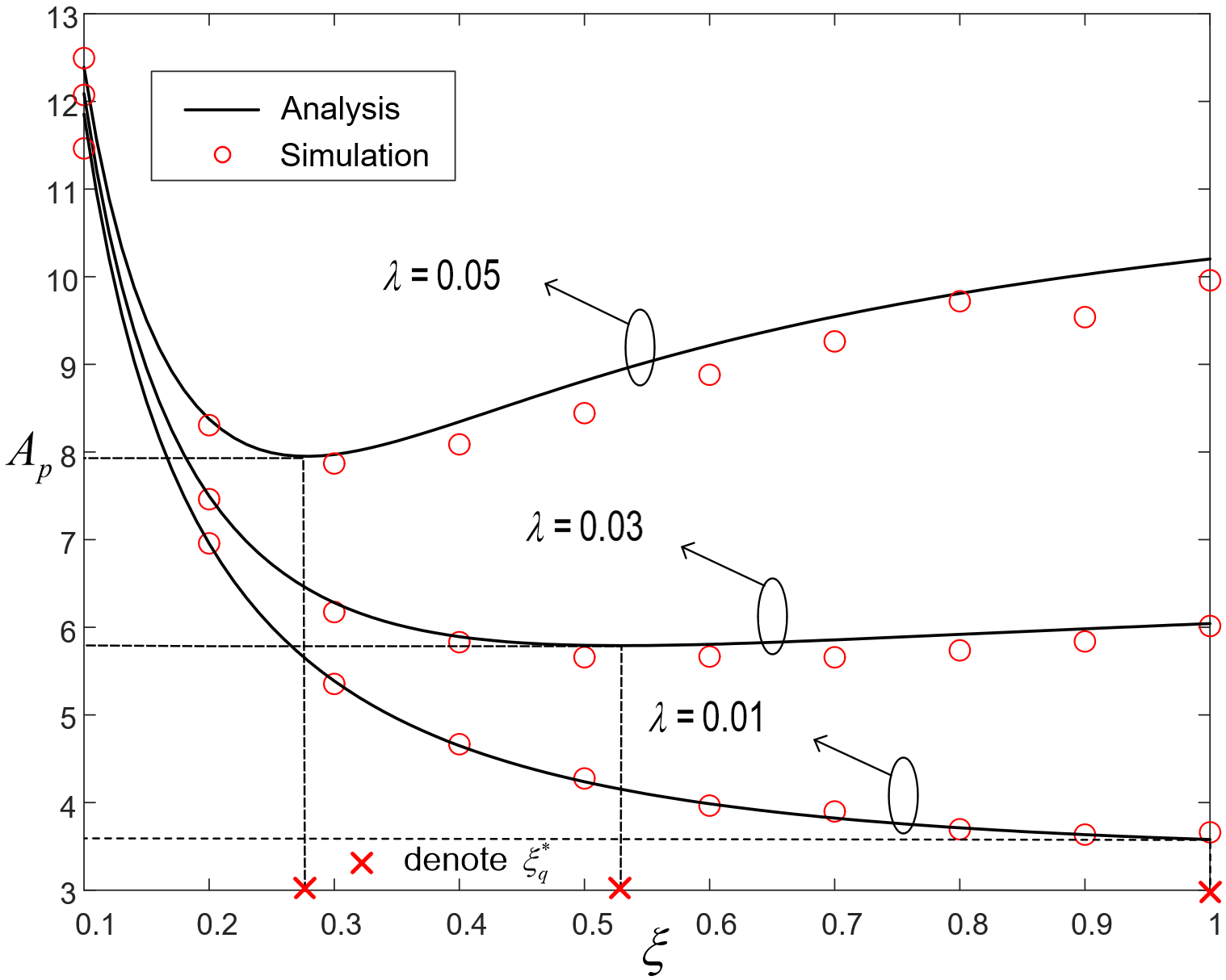}
		\caption{Peak Age of Information $A_p$ versus the packet arrival rate $\xi$. $R=2$, $\alpha=3$, $\theta=0.8$, $q=1$, $\gamma=20$, $\lambda\in\{0.01,0.03,0.05\}$.}
		\label{PAoI_arrival_rate}
	\end{figure}
	Similar observations can also be seen from Fig. \ref{PAoI_arrival_rate}, which demonstrates how the peak AoI $A_p$ varies with the packet arrival rate $\xi$ under different values of the node deployment density.
	
	The optimal channel access probability $q_{\xi}^{*}$ under a given packet arrival rate $\xi$ and the optimal packet arrival rate $\xi^{*}_{q}$ for a given channel access probability $q$ are illustrated in Fig. \ref{PAoI_transmission} and Fig. \ref{PAoI_arrival_rate}, respectively. It can be seen that by optimally tuning $q$ and $\xi$, the peak AoI can be largely reduced. To futher investigate the performance gain brought by joint tuning the channel access probability $q$ and the packet arrival rate $\xi$, Fig. \ref{fig:optimal_q_alpha} demonstrates how the peak AoI $A_p$ varies with the node deployment density $\lambda$ in four cases: 1) fixed $\xi$ and $q$, 2) optimal $q$ with fixed $\xi$, 3) optimal $\xi$ with fixed $q$, 4) joint optimal tuning of $q$ and $\xi$. We can clearly see that with fixed $\xi$ and $q$, the peak AoI $A_p$ exponentially increases with $\lambda$. In sharp contrast, with a joint optimal tuning of $q$ and $\xi$, the peak AoI $A_p$ linearly increases with $\lambda$. It implies that the performance gain becomes significant when $\lambda$ is large.
	\begin{figure}[t]
		\centering
		\includegraphics[width=8cm,height=6.8cm]{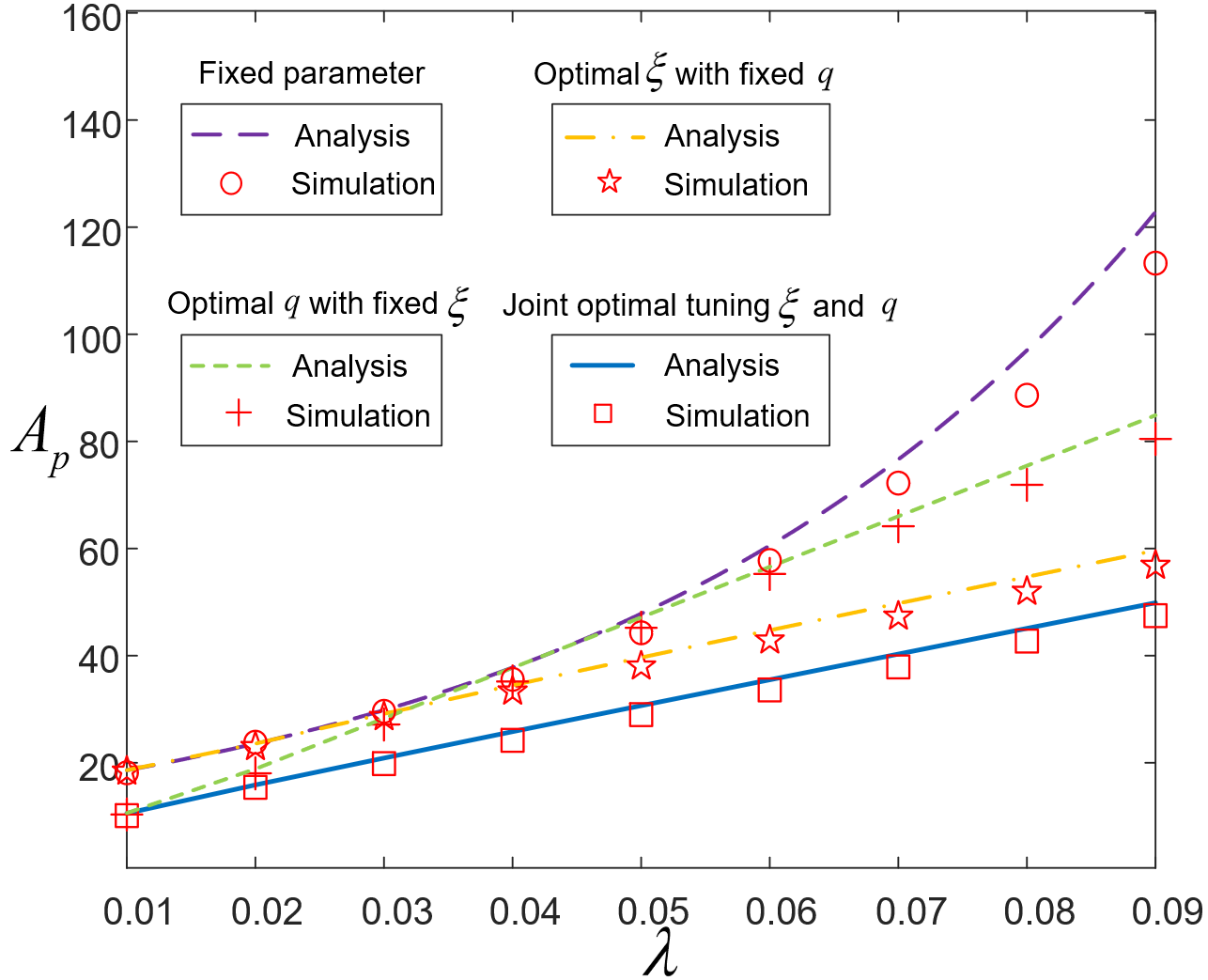}
		\caption{Optimal Peak Age of Information $A_p$ versus the node deployment density. 1) fixed parameter: $\xi=1$, $q=0.6$, 2) optimal $q$ with fixed $\xi=1$, 3) optimal $\xi$ with fixed $q=0.6$, 4) joint optimal tuning $q$ and $\xi$. $\alpha=3$, $\theta=0.8$, $R=3$.}
		\label{fig:optimal_q_alpha}
	\end{figure}
	\section{Conclusion}\label{conclusion}
	In this paper, we conducted an analytical study of optimizing AoI in a random access network by tuning system parameters such as the channel access probability and the packet arrival rate.
	Analytical expressions for the optimal peak AoI, as well as the corresponding system parameters, are derived for cases of seperate tuning and jointly tuning. In the seperate tuning case, when the node deployment density is small, information packets should be generated as frequently as possible, so as to achieve the optimal AoI performance. The same can apply to the optimal channel access rate, where transmitters should access the channel at each time slot. When the node deployment density becomes large, the optimal packet arrival rate and the optimal channel access probability should decrease as the node deployment density increases. In the jointly tuning case, in contrast, the optimal channel access probability is always set to be one and the optimal packet arrival rate shall decrease as the node deployment density increases. For all the cases of separately or jointly tuning of the channel access probability and the packet arrival rate, the optimal peak AoI linearly grows with the node deployment density as opposed to an exponential growth with the fixed channel access probability and the packet arrival rate. It is therefore of crucial importance to properly tune these parameters toward a satisfactory AoI performance especially in dense networks.
	\appendices
	\section{Proof of Lemma \ref{lemma:p}} \label{prooflemma1}
	Note that according to (9) and (10) in \cite{5226957}, the probability of successful transmission $p$ is determined by the following equation
	\begin{equation}\label{eq:Fixed-Eq1}
	p=\exp{\{-\lambda cR^2\rho q-\theta R^\alpha\gamma^{-1}\}},
	\end{equation}
	where $c=\frac{\pi\theta^{\frac{2}{\alpha}}}{\text{sinc}(\frac{2}{\alpha})}$
	and $\rho$ denotes the offered load of each transmitter. To derive the offered load $\rho$, let us define the state of each transmitter at time $t$ as the number of packets in the buffer at the begining of the time slot. As the buffer size of each transmitter is one, the state transition process of each transmitter can be model as a Markov chain with the state space $\textbf{X}=\{0,1\}$, where the transition matrix is given by
	\begin{equation}\label{eq:StateTrans}
	\textbf{P} =\left[\begin{array}{cc} p_{0,0} & p_{0,1} \\  p_{1,0} & p_{1,1} \\ \end{array}\right]=
	\left[\begin{array}{cc} 1-\xi+\xi qp & \xi(1- qp) \\  qp & 1- qp \\ \end{array}\right].
	\end{equation}
	where $ p_{i,j}$ is the probability of transiting $i\in \textbf{X}$ to state $j\in \textbf{X}$. According to \eqref{eq:StateTrans}, the steady-state distribution can be derived as
	\begin{equation}\label{eq:steady_pro}
	\left\{
	\begin{array}{lr}
	\pi_{0}=\frac{ qp}{\xi+ qp-\xi qp},   \\
	\pi_{1}=\frac{\xi-\xi qp}{\xi+ qp-\xi qp}.
	\end{array}
	\right.\end{equation}
	The offered load $\rho$ can then be written as
	\begin{equation}\label{eq:rho1}
	\rho=\frac{r}{qp},
	\end{equation}
	where $r$ is the effective packet arrival rate. As one incoming packet would be dropped if it sees a full buffer, the effective packet arrival rate $r$ is given by
	\begin{equation}\label{eq:effective_arrival}
	r=\xi\pi_0=\frac{\xi qp}{\xi+qp-\xi qp}.
	\end{equation}
	By combining \eqref{eq:rho1} and \eqref{eq:effective_arrival}, the offered load of each transmitter $\rho$ can be obtained as
	\begin{equation}\label{eq:rho}
	\rho=\frac{\xi}{\xi+qp-\xi qp}.
	\end{equation}
	Finally \eqref{eq:p} can be obtained by substituting \eqref{eq:rho} into \eqref{eq:Fixed-Eq1}.
	
	\section{Proof of Theorem 1}\label{studyProot}
	Let
	\begin{equation}
	f(p)= -\ln p-\frac{M}{N+p}-K,
	\label{eq:lnp}
	\end{equation}
	where
	\begin{equation}
	M=\lambda cR^2\frac{\xi}{1-\xi},~~
	N=\frac{\xi}{q(1-\xi)},~~
	K=\theta R^\alpha\gamma
	^{-1}.
	\label{MNK}
	\end{equation}
	It can be seen that $f(p)=0$ has the same non-zero roots as the fixed-point equation. The first derivative of $f(p)$ can be written as $f^{'}(p)=\frac{g(p)}{p(N+p)^2}$, where
	\begin{equation}
	g(p)=-\left(p+N-\frac{M}{2}\right)^2+\frac{{M}^2}{4}-MN.
	\label{eq:gp}
	\end{equation}
	Lemma \ref{lemma:gptfp} shows that the number of non-zero roots of $f(p)=0$ for $p\in(0,1]$ is crucially related with the number of non-zero roots of $g(p)=0$ for
	$p\in(0,1]$.
	
	\begin{lemma}
		$f(p)=0$ has three non-zero roots of $0<p_A<p_S<p_L<1$, if and only if g(p)=0 has two non-zero roots $0<p^{'}_1<p^{'}_2<1$ with $f(p_1^{'})<0$ and $f(p_2^{'})>0$; Otherwise, $f(p)=0$ has only one non-zero root $0<p_L<1$.
		\label{lemma:gptfp}
	\end{lemma}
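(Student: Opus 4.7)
The plan is to reduce the problem to a shape analysis of $f$ on $(0,1]$ by exploiting the identity $f'(p) = g(p)/(p(N+p)^2)$: since the denominator is strictly positive on $(0,1]$, the sign of $f'$ is governed entirely by the downward-opening parabola $g$.

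First, I would pin down the boundary behaviour of $f$. Because $-\ln p \to +\infty$ as $p\to 0^+$ while the remaining terms stay bounded, one has $f(0^+) = +\infty$; and $f(1) = -M/(N+1) - K < 0$ since $M,N,K>0$. By the intermediate value theorem, $f$ already has at least one zero in $(0,1)$. A short computation gives $g(0) = -N^2 < 0$, which, together with the fact that $g$ opens downward, forces the real roots of $g$ (when they exist) to be strictly positive, ruling out configurations where a root of $g$ lies at or to the left of the origin.

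The shape of $f$ on $(0,1]$ is then determined by how many roots of $g$ fall inside $(0,1)$. If $g \leq 0$ on the whole interval (either $g$ has no real roots, or both roots are $\geq 1$), then $f$ is strictly decreasing on $(0,1]$ and the two boundary values already give exactly one zero. If $g$ has a single root $r_1\in(0,1)$ with the other root $\geq 1$, then $f$ decreases on $(0,r_1]$ and increases on $[r_1,1]$; because $f(1)<0$ and $f$ is still increasing at the right endpoint, $f(r_1)\leq f(1)<0$, so $f$ stays strictly negative on $[r_1,1]$ and its unique zero lies in $(0,r_1)$. Together these two subcases yield the second clause of the lemma.

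The remaining case is that $g$ has two roots $0<p_1'<p_2'<1$, so $g<0$ on $(0,p_1')\cup(p_2',1]$ and $g>0$ on $(p_1',p_2')$. Accordingly $f$ decreases from $+\infty$ to the local minimum $f(p_1')$, increases to the local maximum $f(p_2')$, and then decreases to $f(1)<0$. Counting sign changes of $f$ against this three-piece monotone profile gives three zeros in $(0,1)$ precisely when $f(p_1')<0<f(p_2')$, with one zero on each monotone branch. If instead $f(p_1')\geq 0$, then $f\geq 0$ on $(0,p_2']$ and $f$ crosses zero only once, on its final descent to $f(1)<0$; if $f(p_2')\leq 0$, then $f(p_1')\leq f(p_2')\leq 0$ forces $f\leq 0$ throughout $[p_1',1]$, leaving again a single crossing, on the initial descent. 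Combining the two directions of implication yields the stated equivalence.

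The computations themselves are elementary; the main obstacle is careful case management, in particular handling the degenerate boundary configurations (a double root of $g$, a root of $g$ falling exactly at $1$, or equality in $f(p_i')=0$), which produce coincident or boundary roots of $f$ and are excluded by the strict inequalities $0<p_A<p_S<p_L<1$ in the statement. Establishing $g(0)=-N^2<0$ up front removes the subtlest of these cases and lets the sign-counting argument proceed cleanly.
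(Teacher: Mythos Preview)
Your proposal is correct and follows essentially the same approach as the paper: both analyze the boundary behavior $f(0^+)=+\infty$, $f(1)<0$, use $f'(p)=g(p)/(p(N+p)^2)$ to reduce the monotonicity structure of $f$ to the sign pattern of the downward parabola $g$, and then perform a case split on how many roots of $g$ lie in $(0,1)$. Your explicit use of $g(0)=-N^2<0$ to force both roots of $g$ positive, and your remark on the degenerate cases (double root, root at $1$, $f(p_i')=0$), are slightly more careful than the paper's treatment, but the argument is otherwise identical.
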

	\begin{proof}
		Since $\lim_{p\to 0} f(p)>0$, $f(1)=-\frac{M}{N+1}-K<0$ and $f(p)$ is continuous function. According to the zero-point theorem, $f(p)$ has non-zero root. As $p(N+p)^2>0$ when $ 0<p\leq 1$, the number of non-zero roots of $f(p)=0$ for $p\in(0,1]$ is crucially related with the number of non-zero roots of $g(p)=0$ for $p\in(0,1]$. So, we consider the root of $g(p)$ in following scenarios:
		
		1) If $g(p)=0$ has no non-zero root for $p\in(0,1]$, then in this case $g(p)<0$ for $p\in (0,1]$. Thus, $f^{'}(p)<0$ and $f(p)$ decreases monotonically when $p\in(0,1]$, so $f(p)$ only has one non-zero root.
		
		2) If $g(p)=0$ has one non-zero root $0<p^{'}\leq1$, then, we have:
		
		~~a) If $g(p)=0$ has two roots when $p\in (-\infty,\infty)$, and one of them in range $p\in(0,1)$, we mark the root in (0,1) as $p^{'}$. So when $p\in (0,p^{'})$, $f(p)$ monotonically decrease and $p\in (p^{'},1)$, $f(p)$ monotonically increase, and we have $f(1)<0$, thus $f(p)$ only has one non-zero root.
		
		~~b) If $g(p)=0$ has one root when $p\in (-\infty,\infty)$, and this root in range $p\in(0,1)$, then $f(p)$ monotonically decrease when $p\in(0,1)$, so $f(p)$ only has one non-zero root. Combinging two cases, when $g(p)=0$ has one non-zero root, $f(p)=0$ only has one non-zero root.
		
		3) If $g(p)=0$ has two non-zero roots $0<p^{'}_1<p^{'}_2<1$, then we have:
		
		~~a) If $p_2^{'}=1$, then $g(p)<0$ for $p\in(0,p_1^{'})$ and $g(p)>0$ for $p\in(p_1^{'},1)$. As a result, $f^{'}(p)<0$ for $p\in(0,p_{1}^{'})$ and $f^{'}(p)>0$ for $p\in(p_{1}^{'},1)$, indicating that $f(p)$ monotonically decreases for $p\in(0,p_{1}^{'})$, and increases for $p\in(p_{1}^{'},1]$. Since $f(1)<0$, we can conclude that in this case, $f(p)=0$ only one non-zero root $0<p_L<1$.
		
		~~b) If $p_2^{'}<1$, then $g(p)<0$ for $p\in(0, p^{'}_1)\cup(p^{'}_2, 1)$, and $g(p)>0$ for $p\in (p^{'}_1, p^{'}_2)$. As a result, $f^{'}(p)<0$ for $p\in(0, p^{'}_1)\cup(p^{'}_2, 1)$, and $f^{'}(p)>0$ for $p\in (p^{'}_1, p^{'}_2)$, indicating that $f(p)$ monotonically decreases for $p\in(0, p^{'}_1)\cup(p^{'}_2, 1)$, and increases for $p\in (p^{'}_1, p^{'}_2)$.Then, we have if $f(p_{1}^{'})>0$ or $f(p_{2}^{'})<0$, $f(p)=0$ has one zero root
		$0<p_L\leq1$; otherwise, $f(p)=0$ has three non-zero roots $0<p_A<p_S<p_L\leq1$ in which $f(p_{1}^{'})<0$ or $f(p_{2}^{'})>0$.
	\end{proof}
	lemma \ref{lemma:3roots} further presents the necessary and sufficient condition for $g(p)=0$ has two non-zero roots $0<p^{'}_1<p^{'}_2<1$ with $f(p^{'}_{1})<0$ and $f(p^{'}_{2})>0$.
	\begin{lemma}
		$g(p)=0$ has two non-zero roots $0<p^{'}_1<p^{'}_2<1$ with $f(p_{1}^{'})<0$ and $f(p_{2}^{'})>0$ if and only if $\frac{4}{q}<\lambda cR^2<\frac{((1-\xi)q+\xi)^2}{q^2\xi(1-\xi)}$ and $\xi_l<\xi<\xi_h$, where $\xi_l$ and $\xi_h$ are given in \eqref{arrival_low}
		and \eqref{arrival_high},
		respectively.
		\label{lemma:3roots}
	\end{lemma}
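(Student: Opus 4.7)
I would reduce each of the two clauses of the lemma---(a) existence of roots $0 < p'_1 < p'_2 < 1$ of $g$, and (b) the sign conditions $f(p'_1) < 0$, $f(p'_2) > 0$---to explicit inequalities in the parameters, using the expanded form $g(p) = -p^2 + (M-2N)p - N^2$ together with the shorthand $\mu := q\lambda cR^2 = M/N$ and $\delta := \sqrt{\tfrac14 - \tfrac1\mu}$. The roots, when real, are $p'_{1,2} = (M-2N \mp \sqrt{M^2 - 4MN})/2$.

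The first step is to handle the clause on the location of the roots. Distinct real roots exist iff the discriminant $M^2 - 4MN$ is positive, which is exactly $\mu > 4$, i.e., $\lambda cR^2 > 4/q$. Vieta's product $p'_1 p'_2 = N^2 > 0$ combined with $g(0) = -N^2 < 0$ forces any real roots to be positive. The remaining requirement $p'_2 < 1$ is equivalent to $g(1) < 0$; a direct expansion gives $g(1) = M - (1+N)^2$, and the negativity of this expression rewrites to the upper bound $\lambda cR^2 < ((1-\xi)q+\xi)^2/(q^2 \xi (1-\xi))$ stated in the lemma.

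The second step is to simplify $f(p'_i) = -\ln p'_i - M/(N+p'_i) - K$ using $g(p'_i) = 0$. The vertex form $(p + N - M/2)^2 = M^2/4 - MN$ can be rewritten as $(N+p)^2 - M(N+p) + MN = 0$, yielding $N + p'_i = (M \mp \sqrt{M^2 - 4MN})/2$ and hence the clean identity
\[
\frac{M}{N+p'_i} \;=\; \frac{1}{\tfrac12 \mp \delta}, \qquad i = 1,2.
\]
Factoring $p'_i = N \cdot N_{\mp}$, where $N_{\pm} = \tfrac{\mu}{2} - 1 \pm \mu\delta$ are the $\xi$-independent roots of $N^2 - (\mu-2)N + 1$, the inequalities $f(p'_1) < 0$ and $f(p'_2) > 0$ become one-sided bounds $N > \exp\{-K - 1/(\tfrac12 - \delta)\}/N_-$ and $N < \exp\{-K - 1/(\tfrac12 + \delta)\}/N_+$, respectively. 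Substituting $N = \xi/(q(1-\xi))$ and solving linearly for $\xi$ recovers exactly the thresholds \eqref{arrival_low} and \eqref{arrival_high}.

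I expect the main obstacle to be the algebraic manipulation in the second step: the move of passing to the auxiliary $(N+p)$-quadratic to collapse $M/(N+p'_i)$ into the tidy form $1/(\tfrac12 \mp \delta)$ is what makes the $\xi$-inequalities close into the precise expressions $\xi_l, \xi_h$ given in the statement. One short consistency check completes the picture, namely that $\xi < \xi_h$ already implies $p'_2 < 1$: since $\exp\{-K - 1/(\tfrac12 + \delta)\} < 1$, one has $\xi_h < q/(q + N_+)$, which is exactly the threshold where $p'_2 = 1$, so the upper $\lambda cR^2$ bound appears naturally rather than as an independent hypothesis.
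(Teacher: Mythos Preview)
Your proposal is correct and follows essentially the same route as the paper: locate the roots of the downward parabola $g$ via its values at $0$ and $1$ and its discriminant, write $p'_{1,2}$ explicitly, substitute into $f$, and solve the resulting log-inequalities for $\xi$. Your introduction of $\mu=q\lambda cR^2$ and $\delta=\sqrt{1/4-1/\mu}$, together with the auxiliary quadratic in $N+p$, packages the same algebra a bit more cleanly than the paper's direct manipulation of $M,N$; and your closing observation that $\xi<\xi_h$ already forces $p'_2<1$ (so the upper $\lambda cR^2$ bound is implied by the $\xi$-bound) is a tidier endpoint than the paper's separate comparison of the two candidate upper bounds.
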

	\begin{proof}
		$g(p)=0$ has two non-zero roots, when $\lim_{p\to 0} g(p)<0$, $g(1)< 0$ and peak value of $g(p)$ is larger than zero, and find that $g(p)=0$ has two non-zero roots $p^{'}_1$ and $p^{'}_2$, if
		\begin{equation}\frac{4}{q}<\lambda cR^2<\frac{((1-\xi)q+\xi)^2}{q^2\xi(1-\xi)},\text{and} \quad 0<q<1, 0<\xi<1.
		\label{eq:gpvalue}
		\end{equation}
		Two roots of $g(p)=0$ can be written as
		\begin{equation}p^{'}_1=\frac{M}{2}-N-\sqrt{\frac{{M}^2}{4}-MN},\end{equation}
		\begin{equation}
		p^{'}_2=\frac{M}{2}-N+\sqrt{\frac{{M}^2}{4}-MN},
		\end{equation}
		from $0<p^{'}_1<p^{'}_2<1$, we obtain
		\begin{align}
		\lambda cR^2<\frac{((1-\xi)q+\xi)^2}{q^2\xi(1-\xi)},
		\label{eq:gproots1}
		\\
		\frac{2}{q}<\lambda cR^2<\frac{2}{q}+\frac{2}{\xi}-2.
		\label{eq:gproots2}
		\end{align}
		So combining \eqref{eq:gpvalue}, \eqref{eq:gproots1}, \eqref{eq:gproots2}, the fixed-point equation has three non-zero roots $0<p_{A}<p_{S}<p_{L}<1$, if
		\begin{equation}\frac{4}{q}<\lambda cR^2<\min\left\{\frac{((1-\xi)q+\xi)^2}{q^2\xi(1-\xi)},\frac{2}{q}+\frac{2}{a}-2\right\}
		\label{eq:lambdacRcondition}\quad\text{and} \quad 0<q<1,0<\xi<1,\end{equation}
		\begin{align}
		f(p^{'}_1)&=-\frac{M}{\frac{{M}}{2}-\sqrt{\frac{{M}^2}{4}-MN}}-\ln\left(\frac{M}{2}-N-\sqrt{\frac{{M}^2}{4}-MN}\right)-\theta R^\alpha\gamma^{-1} < 0,
		\label{eq:fp1'MN}
		\\
		f(p^{'}_2)&=-\frac{M}{\frac{M}{2}+\sqrt{\frac{{M}^2}{4}-MN}}-\ln\left(\frac{M}{2}-N+\sqrt{\frac{{M}^2}{4}-MN}\right)-\theta R^\alpha\gamma^{-1} > 0,
		\label{eq:fp2'MN}
		\end{align}
		where $M=\lambda cR^2\frac{\xi}{1-\xi}$,
		$N=\frac{\xi}{q(1-\xi)}.$
		
		Firstly, we simplify \eqref{eq:lambdacRcondition}, which means
		$
		\frac{4}{q}<\frac{2}{q}+\frac{2}{\xi}-2$ and $
		\frac{4}{q}<\frac{((1-\xi)q+\xi)^2}{q^2\xi(1-\xi)}$.
		From that, we can get $q>\frac{\xi}{1-\xi}$. Then, we prove that $\frac{2}{q}+\frac{2}{\xi}-2>\frac{((1-\xi)q+\xi)^2}{q^2\xi(1-\xi)}$ in this case. We have
		\begin{equation}
		\frac{2}{q}+\frac{2}{\xi}-2-\frac{((1-\xi)q+\xi)^2}{q^2\xi(1-\xi)}=\frac{q^2(1-\xi)^2-\xi^2}{q^2\xi(1-\xi)},
		\end{equation}
		As $q>\frac{\xi}{1-\xi}$, we get $\xi<1-\frac{1}{1+q}$. And $q\in(0,1]$, thus, $\xi \in(0,0.5]$. Then, we get $q^2(1-\xi)^2>\xi^2$, and we have $\frac{2}{q}+\frac{2}{\xi}-2-\frac{((1-\xi)q+\xi)^2}{q^2\xi(1-\xi)}>0$, which means \eqref{eq:lambdacRcondition} can be written as
		\begin{equation}
		\frac{4}{q}<\lambda cR^2<\frac{((1-\xi)q+\xi)^2}{q^2\xi(1-\xi)}.
		\label{eq:lcr2final}
		\end{equation}
		
		Moreover, by combining \eqref{MNK}, \eqref{eq:fp1'MN}, \eqref{eq:fp2'MN}, we can obtain \eqref{arrival_low}, \eqref{arrival_high}.
	\end{proof}
	Finally, Theorem \ref{Theorem_p_root} can be obtained by combining Lemma \ref{lemma:gptfp} and Lemma \ref{lemma:3roots}.
	
	\section{Proof of monotonicity of $p_A$, $p_L$ }\label{studyPLtrend}
	According to \eqref{eq:p}, we have 	
	\begin{align}
	\frac{\partial{p}}{\partial{\xi}}&=\frac{\lambda cR^2p^2}{\lambda cR^2 p \xi(1-\xi)-(\frac{\xi}{q}+p(1-\xi))^2}=\frac{\lambda cR^2p^2}{g(p)}\frac{1}{(1-\xi)^2},
	\label{pvsalpha}
	\\
	\frac{\partial{p}}{\partial{q}}&=\frac{\lambda cR^2(\frac{\xi}{q})^2p}{\lambda cR^2 p \xi(1-\xi)-(\frac{\xi}{q}+p(1-\xi))^2}=\frac{\lambda cR^2(\frac{\xi}{q})^2p}{g(p)}\frac{1}{(1-\xi)^2},
	\label{pvsq}
	\\
	\frac{\partial{p}}{\partial{\lambda}}&=\frac{cR^2\frac{\xi}{q}(\xi+p q(1-\xi))}{\lambda cR^2 p \xi(1-\xi)-(\frac{\xi}{q}+p(1-\xi))^2}=\frac{cR^2\frac{\xi}{q}(\xi+p q(1-\xi))}{g(p)}\frac{1}{(1-\xi)^2}.
	\label{pvslambda}
	\end{align}
	where $g(p)$ is given in \eqref{eq:gp}. Let us consider the following scenarios:
	
	1) If $g(p)=0$ has no non-zero root for $p\in(0,1]$, then $g(p)<0$ for $p\in (0,1]$. In this case, \eqref{eq:p} has one-zero root $p_L$, which is a steady-state point according to the approximate trajectory analysis in \cite{6205590}. We then have have $g(p_L)<0$.
	
	2) If $g(p)=0$ has one non-zero root for $0<p^{'}_1\leq1$, then we have:
	
	~~a) If $g(p)=0$ has two roots when $p\in(-\infty,\infty]$, and one of them in range $p\in(0,1]$, then $g(p)<0$ for $p\in(0,p^{'}_1)$ and $g(p)>0$ for $p\in(p^{'}_1,1)$. \eqref{eq:p} has one-zero root $p_L$, and $p_L< p^{'}_1$, which is a steady-state point according to the approximate trajectory analysis in \cite{6205590}. We then have $g(p_L)<0$.
	
	~~b) If $g(p)=0$ has one root when $p\in (-\infty,\infty)$, and this root in range $p\in(0,1)$, then $g(p)<0$ for $p\in (0,1]$. \eqref{eq:p} has one-zero root $p_L$, which is a steady-state point according to the approximate trajectory analysis in \cite{6205590}. We then have $g(p_L)<0$.
	
	3) If $g(p)=0$ has two non-zero roots for $0<p^{'}_1<p^{'}_2<1$, then we have:
	
	~~a) If $p^{'}_2=1$, then $g(p)<0$ for $p\in(0,p^{'}_1)$ and $g(p)>0$ for $p\in(p^{'}_1,1)$, \eqref{eq:p} has one non-zero root $p_L<p^{'}_1$, which is a steady-state point according to the approximate trajectory analysis in \cite{6205590}. We then have $g(p_L)<0$.
	
	~~b) If $p^{'}_2<1$, then $g(p)<0$ for $p\in(0, p^{'}_1)\cup(p^{'}_2, 1)$, and $g(p)>0$ for $p\in (p^{'}_1, p^{'}_2)$. In this case, \eqref{eq:p} may has one non-zero root $p_L\in(0, p^{'}_1)\cup(p^{'}_2, 1)$, or three non-zero roots $p_A<p^{'}_1<p_S<p^{'}_2<p_L$, among which $p_A$ and $p_L$ are the steady-state point according to the approximate trajectory analysis in \cite{6205590}. We then have $g(p_A)<0$ and $g(p_L)<0$.
	
	Finally, we can conclude that $g(p_A)<0$, $g(p_L)<0$. Thus, it can be obtained from \eqref{pvsalpha}, \eqref{pvsq} and \eqref{pvslambda} that $\frac{\partial{p}}{\partial{\xi}}<0$, $\frac{\partial{p}}{\partial{q}}<0$, $\frac{\partial{p}}{\partial{\lambda}}<0$ at both the steady states $p_A$, $p_L$.
	
	\section{Proof of Theorem \ref{PAoIdef}}\label{proofPAoI}
	The peak AoI can be written as
	\begin{equation}
	A_p=E[T_{k-1}]+E[Y_k].
	\end{equation}
	where $T_{k-1}$ is service time of $(k-1)^{th}$ packet, $Y_k$ is the inter-departure time, between the $(k-1)^{th}$ packet served and the $k^{th}$ packet served.
	Thus, $E[T_{k-1}]$  can be obtained as
	\begin{equation}
	E[T_{k-1}]=\frac{1}{ qp}.
	\label{eq:Tk-1}
	\end{equation}
	For $Y_k$, it is a summation of the time interval $Y_k^a$  between $(k-1)^{th}$ packet departure and $k^{th}$ packet departure, and the time interval $Y_k^s$ between $k^{th}$ packet arrival and departure, i.e., $Y_k=Y_k^a+Y_k^s$.  As Fig \ref{AoIcurve} illustrates, the departure of ${(k-1)}^{th}$ packet and the arrival of $k^{th}$ packet can occur almost at the same time, but the time interval between the $k^{th}$ packet's arrival and departure costs at least one time slot.
	Thus, we have, $E[Y_k^a]=\frac{1}{\xi}-1$ and $E[Y_k^s]=\frac{1}{ qp}$.
	which leads to
	\begin{equation}
	E[Y_k]=E[t_a]+E[t_s]=\frac{1}{\xi}-1+\frac{1}{ qp}.
	\end{equation}
	
	\section{Proof of Theorem \ref{Theorem_OPtimalQ}}\label{prooftheorem2}
	According to \eqref{eq:p} and \eqref{eq:PeakAge}, we have
	\begin{equation}
	\frac{\partial{A_p}}{\partial{q}}=\frac{-2(p+q\frac{\partial{p}}{\partial{q}})}{q^2 p^2 }=\frac{2\xi^2}{q^2p}\left(-\frac{1}{\xi^2}-\frac{\lambda cR^2\frac{1}{q}}{\lambda cR^2 p \xi(1-\xi)-(\frac{\xi}{q}+p(1-\xi))^2}\right).\label{eq:Apvsq}
	\end{equation}
	We then have
	\begin{equation}
	\lim_{q \to 0}\frac{\partial{A_p}}{\partial{q}}<0
	\end{equation}
	and
	\begin{equation}
	\lim_{q\to 1}\frac{\partial{A_p}}{\partial{q}}=-\frac{2}{p_{*}}-\frac{2\lambda cR^2\xi^2\frac{1}{p_{*}}}{\lambda cR^2p_{*}\xi(1-\xi)-(\xi+p_{*}(1-\xi))^2},
	\end{equation}
	where $p_{*}$ is the non-zero root of \eqref{eq:q1valuep}. As $\frac{\partial{p}}{\partial{q}}<0$, we have
	\begin{equation}
	\lambda cR^2p_{*}\xi(1-\xi)-(\xi+p_{*}(1-\xi))^2<0.
	\end{equation}
	Then, when
	\begin{equation}
	\lambda c R^2>\frac{(\xi+p_{*}(1-\xi))^2}{\xi^2+p_{*}\xi(1-\xi)}=1+\frac{p_{*}(1-\xi)}{\xi},
	\end{equation}
	we have $\lim_{q\to 1}\frac{\partial{A_p}}{\partial{q}}>0$.
	The peak AoI $A_p$ can then be optimized when $q \in (0,1)$. By combining $\frac{\partial{A_p}}{\partial{q}}=0$ and \eqref{eq:p}, the optimal channel access probability $q$ can be obtained as
	\begin{equation}
	q=\frac{1}{\lambda cR^2-\exp{\left\{-\theta R^\alpha\gamma^{-1}-1\right\}}\frac{1-\xi}{\xi}}.
	\label{eq:accessrate}
	\end{equation}
	The optimal peak AoI can be obtained by substituting \eqref{eq:accessrate} into \eqref{eq:PeakAge}.
	
	When $	\lambda c R^2\leq 1+\frac{p_{*}(1-\xi)}{\xi}$, on the other hand, the optimal channel access rate is given by $q=1$, and the corresponding optimal peak AoI can be obtained by combining $q=1$ and \eqref{eq:PeakAge}.
	
	\section{Proof of Theorem \ref{Theorem_OPtimalAlpha}}\label{prooftheorem3}
	According to \eqref{eq:p} and \eqref{eq:PeakAge}, we have
	\begin{equation}\label{eq:opalpha}
	\frac{\partial{A_p}}{\partial{\xi}}=-\left(\frac{1}{\xi^2}+\frac{2\frac{\partial{p}}{\partial{\xi}}}{q p^2}\right)=-\frac{1}{\xi^2}-\frac{2\lambda cR^2\frac{1}{q}}{\lambda cR^2 p \xi(1-\xi)-(\frac{\xi}{q}+p(1-\xi))^2}
	\end{equation}
	We then have
	\begin{equation}
	\lim_{\xi \to 0}\frac{\partial{A_p}}{\partial{\xi}}<0
	\end{equation}
	and
	\begin{equation}
	\lim_{\xi \to 1}\frac{\partial{A_p}}{\partial{\xi}}=2\lambda c R^2 q-1.
	\end{equation}
	When
	$\lambda c R^2>\frac{1}{2 q}$, we have $\lim_{\xi \to 1}\frac{\partial{A_p}}{\partial{\xi}}>0$
	the peak AoI $A_p$ can then be optimized when $\xi \in (0,1)$. By combining $\frac{\partial{A_p}}{\partial{\xi}}=0$ and \eqref{eq:p}, the optimal packet arrival rate $\xi$ can be obtained as
	\begin{equation}
	\label{eq:arrivalrate}
	\xi=\frac{2q \exp{\left\{-\frac{2}{\sqrt{1+\frac{4}{q\lambda cR^2}}+1}-\theta R^\alpha \gamma^{-1}\right\}}}{q\lambda cR^2\left(\sqrt{1+\frac{4}{q\lambda cR^2}}+1\right)+2q\exp{\left\{-\frac{2}{\sqrt{1+\frac{4}{q\lambda cR^2}}+1}-\theta R^\alpha \gamma^{-1}\right\}}-2}.
	\end{equation}
	The optimal peak AoI can be obtained by substituting \eqref{eq:arrivalrate} into \eqref{eq:PeakAge}.
	
	When $\lambda cR^2\leq\frac{1}{2q}$, on the other hand, the optimal packet arrival rate is given by $\xi=1$, and the corresponding optimal peak AoI can be obtained by combining $\xi=1$ and \eqref{eq:PeakAge}.
	
	\section{Proof of Theorem \ref{Theorem_OPtimalqAlphaboth}}\label{prooftheorem4}
	We denote that
	\begin{equation}
	A_p^{*}=\underset{\{q\}}{\min}~A_{p}^{\xi=\xi_q^{*}}.
	\end{equation}
	From Theorem \ref{Theorem_OPtimalAlpha}, for $q<\frac{1}{2\lambda c R^2}$, we have
	\begin{equation}
	\frac{\mathrm{d}{A_{p}^{\xi=\xi_q^{*}}}}{\mathrm{d}{q}}=\frac{2(\lambda cR^2-\frac{1}{q})}{q}\exp{\left\{\lambda cR^2q+\theta R^\alpha \gamma^{-1}\right\}}.
	\end{equation}
	As $q<\frac{1}{2\lambda c R^2}$, thus, $\frac{\mathrm{d}{A_{p}^{\xi=\xi_q^{*}}}}{\mathrm{d}{q}}<0$, which means that $A_{p}^{\xi=\xi^{*}_{q}}$ decrease monotonically when $q<\frac{1}{2\lambda c R^2}$.
	On the other hand, for $q>\frac{1}{2\lambda c R^2}$, we denote $k=-\frac{2}{\sqrt{1+\frac{4}{q\lambda cR^2}}+1}$, $k\in(-1,-\frac{1}{2})$, then
	\begin{equation}
	A_{p}^{\xi=\xi_q^{*}}=\frac{1}{qe^{k}e^{-\theta R^\alpha \gamma^{-1}}(k+1)}.
	\end{equation}
	We find that $k$ decreases with the increase of $q$, $\frac{1}{e^{k}(k+1)}$ decreases with the increase of $k$ and $\frac{1}{e^{k}(k+1)}>0$, so $A_{p}^{\xi=\xi_q^{*}}$ decrease monotonically with the increase of $q$ when $q>\frac{1}{2\lambda c R^2}$.
	
	Moreover, when $\lambda cR^2=\frac{1}{2q}$, $A_{p}^{\xi_q^{*}=1}=A_{p}^{\xi_q^{*}<1}$, which means $A_{p}^{\xi=\xi_q^{*}}$ is a coutinuous function for $q$, thus $A_{p}^{\xi=\xi_q^{*}}$ decrease monotonically for $q\in (0,1]$.
	
	Therefore, $A_p^{*}=A_{p}^{\xi=\xi_q^{*}}$ when $q=1$ and Theorem \ref{Theorem_OPtimalqAlphaboth} can be obtained by combining $q=1$ and Theorem \ref{Theorem_OPtimalAlpha}.

	\bibliography{reference}

\begin{thebibliography}{10}

\bibitem{6195689}
S.~{Kaul}, R.~{Yates}, and M.~{Gruteser}, ``Real-time status: How often should
  one update?,'' in {\em Proc. IEEE INFOCOM, Orlando, FL, USA}, pp.~2731--2735,
  Mar. 2012.

\bibitem{6284003}
R.~D. {Yates} and S.~{Kaul}, ``Real-time status updating: Multiple sources,''
  in {\em Proc. IEEE ISIT, Cambridge, MA, USA}, pp.~2666--2670, July 2012.

\bibitem{7415972}
M.~Costa, M.~Codreanu, and A.~Ephremides, ``On the age of information in status
  update systems with packet management,'' {\em IEEE Trans. Inf. Theory},
  vol.~62, no.~4, pp.~1897--1910, 2016.

\bibitem{7795343}
C.~Kam, S.~Kompella, G.~D. Nguyen, J.~E. Wieselthier, and A.~Ephremides,
  ``Controlling the age of information: Buffer size, deadline, and packet
  replacement,'' in {\em Proc. IEEE MILCOM, Baltimore, MD, USA}, pp.~301--306,
  Nov. 2016.

\bibitem{8943134}
R.~{Talak}, S.~{Karaman}, and E.~{Modiano}, ``Optimizing information freshness
  in wireless networks under general interference constraints,'' {\em IEEE/ACM
  Trans. Networking}, vol.~28, no.~1, pp.~15--28, 2020.

\bibitem{7492912}
Q.~{He}, D.~{Yuan}, and A.~{Ephremides}, ``Optimizing freshness of information:
  On minimum age link scheduling in wireless systems,'' in {\em Proc. WiOpt,
  Tempe, AZ, USA}, pp.~1--8, 2016.

\bibitem{8514816}
I.~{Kadota}, A.~{Sinha}, E.~{Uysal-Biyikoglu}, R.~{Singh}, and E.~{Modiano},
  ``Scheduling policies for minimizing age of information in broadcast wireless
  networks,'' {\em IEEE/ACM Trans. Networking}, vol.~26, no.~6, pp.~2637--2650,
  2018.

\bibitem{8933047}
I.~{Kadota} and E.~{Modiano}, ``Minimizing the age of information in wireless
  networks with stochastic arrivals,'' {\em IEEE Trans. Mobile Comput.},
  vol.~20, no.~3, pp.~1173--1185, 2021.

\bibitem{8778671}
B.~{Zhou} and W.~{Saad}, ``Joint status sampling and updating for minimizing
  age of information in the internet of things,'' {\em IEEE Trans. Commun.},
  vol.~67, no.~11, pp.~7468--7482, 2019.

\bibitem{8006544}
R.~D. {Yates} and S.~K. {Kaul}, ``Status updates over unreliable multiaccess
  channels,'' in {\em Proc. IEEE ISIT, Aachen, Germany}, pp.~331--335, 2017.

\bibitem{9162973}
H.~{Chen}, Y.~{Gu}, and S.~C. {Liew}, ``Age-of-information dependent random
  access for massive {IoT} networks,'' in {\em IEEE INFOCOM workshop, Toronto,
  Canada,}, pp.~930--935, 2020.

\bibitem{yavascan2020analysis}
O.~T. Yavascan and E.~Uysal, ``Analysis of slotted aloha with an age
  threshold,'' {\em Available as ArXiv:2007.09197}, 2020.

\bibitem{9174254}
X.~{Chen}, K.~{Gatsis}, H.~{Hassani}, and S.~S. {Bidokhti}, ``Age of
  information in random access channels,'' in {\em Proc. IEEE ISIT, Los
  Angeles, USA}, pp.~1770--1775, 2020.

\bibitem{8935400}
J.~{Sun}, Z.~{Jiang}, B.~{Krishnamachari}, S.~{Zhou}, and Z.~{Niu},
  ``Closed-form whittle’s index-enabled random access for timely status
  update,'' {\em IEEE Trans. Commun.}, vol.~68, no.~3, pp.~1538--1551, 2020.

\bibitem{1}
H.~{ElSawy}, A.~{Sultan-Salem}, M.~{Alouini}, and M.~Z. {Win}, ``Modeling and
  analysis of cellular networks using stochastic geometry: A tutorial,'' {\em
  IEEE Commun. Surv. Tutor.}, vol.~19, no.~1, pp.~167--203, 2017.

\bibitem{20160000}
J.~G. Andrews, A.~K. Gupta, and H.~S. Dhillon, ``A primer on cellular network
  analysis using stochastic geometry,'' {\em Available as ArXiv:1604.03183},
  2016.

\bibitem{001}
Y.~{Hu}, Y.~{Zhong}, and W.~{Zhang}, ``Age of information in poisson
  networks,'' in {\em Proc. WCSP, Hangzhou, China}, pp.~1--6, Oct. 2018.

\bibitem{002}
H.~H. {Yang}, A.~{Arafa}, T.~Q.~S. {Quek}, and V.~{Poor}, ``Optimizing
  information freshness in wireless networks: A stochastic geometry approach,''
  {\em to appear in IEEE Trans. Mobile. Comput.}

\bibitem{003}
M.~{Emara}, H.~{Elsawy}, and G.~{Bauch}, ``A spatiotemporal model for peak
  {AoI} in uplink {IoT} networks: Time versus event-triggered traffic,'' {\em
  IEEE Internet of Things J.}, vol.~7, no.~8, pp.~6762--6777, 2020.

\bibitem{004}
H.~H. Yang, A.~Arafa, T.~Q.~S. Quek, and H.~V. Poor, ``Age of information in
  random access networks: A spatiotemporal study,'' in {\em Proc. IEEE
  Globecom, Taipei, Taiwan}, pp.~1--6, Dec. 2020.

\bibitem{9316915}
H.~H. {Yang}, C.~{Xu}, X.~{Wang}, D.~{Feng}, and T.~Q.~S. {Quek},
  ``Understanding age of information in large-scale wireless networks,'' {\em
  to appear in IEEE Trans. Wireless Commun.}

\bibitem{mankar2020throughput}
P.~D. Mankar, Z.~Chen, M.~A. Abd-Elmagid, N.~Pappas, and H.~S. Dhillon,
  ``Throughput and age of information in a cellular-based iot network,'' {\em
  Available as ArXiv:2005.09547}, 2020.

\bibitem{mankar2020spatial}
P.~D. Mankar, M.~A. Abd-Elmagid, and H.~S. Dhillon, ``Spatial distribution of
  the mean peak age of information in wireless networks,'' {\em Available as
  ArXiv:2006.00290}, 2020.

\bibitem{5601963}
K.~{Stamatiou} and M.~{Haenggi}, ``Random-access poisson networks: Stability
  and delay,'' {\em IEEE Commun. Lett.}, vol.~14, no.~11, pp.~1035--1037, 2010.

\bibitem{apdefine}
V.~Tripathi, R.~Talak, and E.~Modiano, ``Age of information for discrete time
  queues,'' {\em Available as ArXiv:1901.10463}, 2019.

\bibitem{6205590}
L.~{Dai}, ``Stability and delay analysis of buffered aloha networks,'' {\em
  IEEE Trans. Wireless Commun.}, vol.~11, no.~8, pp.~2707--2719, 2012.

\bibitem{5226957}
M.~{Haenggi}, J.~G. {Andrews}, F.~{Baccelli}, O.~{Dousse}, and
  M.~{Franceschetti}, ``Stochastic geometry and random graphs for the analysis
  and design of wireless networks,'' {\em IEEE J. Sel. Areas Commun.}, vol.~27,
  no.~7, pp.~1029--1046, 2009.

\end{thebibliography}
\end{document}